\newcolumntype{P}[1]{>{\centering\arraybackslash}p{#1}}
\definecolor{Gray}{gray}{0.9}
\newcommand{\norm}[1]{\left\lVert{#1}\right\rVert}
\newtheorem{theorem}{Theorem}
\newtheorem{lemma}{Lemma}
\newtheorem{cor}{Corollary}
\newtheorem{definition}{Definition}
\title{Information-Theoretic Bounds for Steganography in Multimedia}
\author{
	Hassan~Y.~El-Arsh,~
	Amr~Abdelaziz,~\IEEEmembership{Member,~IEEE,}
	Ahmed~Elliethy,~\IEEEmembership{Member,~IEEE,}
	Hussein~A.~Aly,~\IEEEmembership{Senior~Member,~IEEE,}
	and~T. Aaron Gulliver,~\IEEEmembership{Senior~Member,~IEEE}
	\thanks{H. Y. El-Arsh, Amr Abdelaziz, A. Elliethy, and H. A. Aly are with Dept. of Computer Engineering, Military Technical College, Cairo, Egypt.
(e-mail: hassan.yakout@ymail.com, amrashry@mtc.edu.eg, a.s.elliethy@mtc.edu.eg, haly@ieee.org).
T. Aaron Gulliver is with the Dept. of Electrical and Computer Engineering,
University of Victoria, Victoria, BC Canada. (e-mail:agullive@ece.uvic.ca)}\vspace{-20pt}
	\thanks{This paper has supplementary material available at http://ieeexplore.ieee.org, provided by the authors.}
	\thanks{Color versions of one or more of the figures in this paper are available online at http://ieeexplore.ieee.org.}
}
\newcommand{\Sender}{\mathbf{A}}
\newcommand{\Receiver}{\mathbf{B}}
\newcommand{\Eavesdropper}{\mathbf{E}}
\newcommand{\AlphabetCovSteg}{\mathbb{V}}
\newcommand{\AlphabetMsg}{\mathbb{X}}
\newcommand{\AlphabetCodedMsg}{\mathbb{M}}
\newcommand{\CoverObject}{\mathbf{c}}
\newcommand{\StegObject}{\mathbf{s}}
\newcommand{\Msg}{\mathbf{x}}
\newcommand{\CodedMsg}{\mathbf{m}}
\newcommand{\AvgProbErrorEavesdropper}{P_E}
\newcommand{\TotalProbErrorEavesdropper}{P_e}
\newcommand{\ProbErrorReceiver}{P_B}
\newcommand{\CoverDist}{P_c}
\newcommand{\StegDist}{P_s}
\newcommand{\MsgDist}{P_m}
\newcommand{\CoverElement}{c_i}
\newcommand{\StegElement}{s_i}
\newcommand{\CoverElementDist}{P_{c_i}}
\newcommand{\StegElementDist}{P_{s_i}}
\newcommand{\CoverDistc}{p_c}
\newcommand{\StegDistc}{p_s}
\newcommand{\MsgDistc}{p_m}
\newcommand{\CoverElementDistc}{p_{c_i}}
\newcommand{\StegElementDistc}{p_{s_i}}
\begin{document}
\maketitle

\begin{abstract}
Steganography in multimedia aims to embed secret data into an innocent looking multimedia cover object.
This embedding introduces some distortion to the cover object and produces a corresponding stego object.
The embedding distortion is measured by a cost function that determines the detection probability of the existence of the embedded secret data.
A cost function related to the maximum embedding rate is typically employed to evaluate a steganographic system.
In addition, the distribution of multimedia sources follows the Gibbs distribution which is a complex statistical model that restricts analysis.
Thus, previous multimedia steganographic approaches either assume a relaxed distribution or presume a proposition on the
maximum embedding rate and then try to prove it is correct.
Conversely, this paper introduces an analytic approach to determining the maximum embedding rate in multimedia cover objects through a constrained optimization problem
concerning the relationship between the maximum embedding rate and the probability of detection by any steganographic detector.
The KL-divergence between the distributions for the cover and stego objects is used as the cost function as it upper bounds the performance of the optimal
steganographic detector.
An equivalence between the Gibbs and correlated-multivariate-quantized-Gaussian distributions is established to solve this optimization problem.
The solution provides an analytic form for the maximum embedding rate in terms of the WrightOmega function.
Moreover, it is proven that the maximum embedding rate is in agreement with the commonly used Square Root Law (SRL) for steganography,
but the solution presented here is more accurate.
Finally, the theoretical results obtained are verified experimentally.
\end{abstract}

\section{Introduction}

Steganography is used to embed data within innocent looking cover objects such as images, audio, video, and text \cite{steg.types},
to produce a stego object that looks similar to the original object but with hidden data embedded.
This implies some distortion of the cover object which can be exploited by a warden to detect if there is hidden data or not using steganalysis.
The goal of steganography is to hide the maximum amount of data subject to a minimum probability of detection by a warden.
Thus, steganography has two competing goals: embedding rate and undetectability.

Undetectability is concerned with determining how to alter a cover object to embed data without detectable distortion to the object.
The distortion is measured by a cost function which can be modeled as the difference between the cover and stego objects \cite{aly.2011},
or the sensitivity of the warden to changes to the cover object \cite{f5}.
The embedding rate is defined as the ratio of the number of hidden data bits to the number of cover data bits for a given undetectability \cite[Ch. 4]{steg.book}.
Maximizing the embedding rate for a given undetectability is fundamental to any steganographic algorithm.
Multimedia objects are excellent choices for steganography due to their rich structural features and their ubiquity in business and daily life \cite{jenifer2018survey,tew.2014}.

A relation between undetectability and embedding rate can be used to evaluate the performance of a multimedia steganographic approach,
but this requires a suitable cost function.
Multimedia has been shown to follow the Gibbs distribution \cite{gibbs1984, gibbs1993, bovik3}, which is a complex statistical model that makes analysis difficult.
For example, in \cite[Ch. 13]{steg.book} the Kullback–Leibler (KL) divergence (also called relative entropy), was used as a measure of cover distortion,
however this requires a precise model of the cover object.

Previous steganography cost functions are either not accurate for all multimedia types or are accurate for only a certain class of embedding or detection techniques \cite{quantized.gauss}.
Further, the complexity of the Gibbs distribution has resulted in the use of relaxed distributions such as the Gaussian distribution \cite{limits1,limits2,quantized.gauss,Jessica2015,var.est,Jessica2013,markov2,markov8}, or a presumed preposition on the maximum embedding rate for which the
correctness must be demonstrated \cite{limits1, limits2}.
Thus, the corresponding results may not reflect the true performance because of these approximations.

To partially address the above problems,
a constrained optimization problem was proposed in \cite{myconf} that models the undetectability using KL divergence and the embedding rate with
the mutual information between the cover object and stego object distributions.
This problem was solved to obtain an analytic relation between the embedding rate and undetectability.
However, a relaxed statistical model (Multivariate Quantized Gaussian Distribution (MQGD)), was used for the cover and stego objects.
Thus, an achievability proof that verifies the reliable extraction of the embedded message at the receiver when the exact cover realization is not present
and the probability of decoding error is nonzero was not obtained.

In this paper, a constrained optimization problem that provides an upper bound on the maximum embedding rate that a multimedia steganographic method can achieve reliably
(i.e., with low probability of error), with an achievability proof for a given undetectability.
The cover object is modeled using an equivalence between the Gibbs distribution and the Correlated Multivariate Quantized Gaussian Distribution (CMQGD).
The corresponding optimization problem is solved to obtain an analytic form for the maximum embedding rate in terms of the WrightOmega function.
It is also shown that the maximum embedding rate agrees with the commonly used Square Root Law (SRL) for steganography~\cite[Ch. 13]{steg.book}.
Finally, this theoretical maximum embedding rate is compared with other rates in the literature \cite{quantized.gauss}, \cite{markov2}, \cite{markov8}.
The results obtained demonstrate that the upper bound presented here is small compared to the embedding rates for the steganographic methods.
One reason is that this bound is for an optimal detector, which is not the case in the literature.

The rest of this paper is organized as follows.
Section~\ref{sec:prev} presents the related work.
The main definitions and assumptions for the proposed constrained optimization problem are given in Section \ref{sec:model}.
Section \ref{sec:MultimediaStoch} provides an equivalence relation between the Gibbs distribution and CMQGD.
The solution of the proposed constrained optimization problem is given in Section \ref{sec:proof} and the achievability proof is presented in Section \ref{sec:ach}.
Section \ref{sec:srl} considers the relationship between the SRL and results given here.
An interpretation is given in Section \ref{sec:exp} and discussed in Section~\ref{sec:discuss}.
Finally, some concluding remarks are given in Section \ref{sec:conc}.

\section{Related Work}\label{sec:prev}
Previous steganography approaches employed the KL divergence as a cost function and derived a theoretical bound (under different assumptions), on the
maximum number of bits that can be embedded in a class of multimedia cover objects for a specified probability of detection by a warden.
In \cite{limits1}, continuous Gaussian distributed cover objects were considered (AWGN channels), and the achievability and converse were derived.
The technique in \cite{limits1} was generalized in \cite{limits2} for Multivariate Continuous Gaussian Distributed (MCGD) cover objects (MIMO Gaussian channels).
Although these approaches employ an accurate  model for the cover objects  with the KL divergence cost function,
a preposition on the maximum embedding rate is assumed.

A systematic technique for constructing the distortion function based on the relationship between the steganographic Fisher information and KL divergence
was given in \cite{Jessica2013}.
The embedded data was modeled as an additive image distortion,
and individual pixel costs were determined adaptively to minimize the KL divergence for embedding using least-significant bit matching.
A zero mean independent MQGD was used as the cover object model considering the noise from imaging sensors.
It was shown that the security is better than with the Highly Undetectable Steganograpy (HUGO) algorithm \cite{hugo}.
These results were improved in \cite{Jessica2015} using a Multivariate Generalized Gaussian (MVGG) model combined with a new variance estimator.
This improved the steganographic performance by allowing larger amplitude embeddings in complex (highly textured) image regions due to the
larger tail of the MVGG model.
Content-adaptive pentary steganography utilizing a MVGG model provides comparable performance to
pentary coded HiLL \cite{HILL} and S-UNIWARD \cite{SUNIWARD} against the maxSRMd2 \cite{maxSRMd2} and SRM \cite{SRM}
feature-based steganalysis methods.
Although these techniques provide an accurate analytic relation between the embedding rate and cost function,
the results are based on a relaxed cover model.

In \cite{markov2}, a Gaussian Markov Random Field (GMRF) model was employed which is a non-additive model with a four-element cross neighborhood.
This model has low-dimensional clique structures that can capture the interdependencies among spatially contiguous pixels.
The cost function minimizes the KL divergence between the cover object and stego object.
The cover image is split into two disjoint sub-images which are conditionally independent.
Then, an alternating iterative optimization technique is employed to achieve an efficient embedding which minimizes the KL divergence.
Results were presented which show that the proposed approach outperforms the MiPOD \cite{var.est} and HiLL techniques in terms of
the secure payload against SRM and maxSRMd2.
This approach was extended in \cite{markov8} to higher dimension clique structures (eight element GMRF).
Although analytic relationships between the embedding rate and cost function were given in \cite{markov2,markov8},
they are based on a relaxed model for the cover object.

The approach in \cite{quantized.gauss} considers gray-scale images as cover objects and assumes a MQGD for the cover object, hidden message, and stego object.
Theoretical bounds for the Bayes, Minimax, and Neyman-Pearson likelihood ratio tests were obtained.
This approach maximizes the steganalysis error related to a specified payload.
Although an analytic relation between the embedding rate and cost function was given, the scope of the cost function is limited to
only three types of detectors and a relaxed model (MQGD), is used.

The Square Root Law (SRL)~\cite[Ch. 13]{steg.book} is regarded as the first analytic model of the relationship between the embedding rate
and undetectability in a global framework for multimedia.
The SRL states that the embedding rate for any steganographic technique is proportional to the square root of the number of cover elements.
For example, assume cover object $A$ contains $x$ elements and cover object $B$ contains $4x$ elements.
If we can embed $y$ bits of a secret message within $A$ with a probability of detection $\mathcal{P}$ using a given technique then for the same probability of detection
we can only embed $2y$ bits within $B$.
Further, the embedding rate per cover element in $A$ is $\frac{y}{x}$, but for $B$ the corresponding embedding rate is $\frac{y}{2x}$.
The SRL is valid for any steganographic approach regardless of the embedding and steganalysis methods.
However, it is only a proportional result for a given cost function and so the scaling constant is unknown.

\section{System Model and Problem Statement}\label{sec:model}

\begin{table}
  \centering
  \caption{List of Symbols}
  \label{tab.ListOfSymbols}
  \begin{tabular}{ c | l }
  \hline
$\Sender$ & Sender  \\ \hline
$\Receiver$ & Receiver  \\ \hline
$\Eavesdropper$ & Eavesdropper  \\ \hline
$\AlphabetCovSteg$ & Alphabet of cover and stego elements  \\ \hline
$\AlphabetMsg$ & Alphabet of the original message  \\ \hline
$\AlphabetCodedMsg$ & Alphabet of the coded message  \\ \hline
$\CoverObject$ & Cover object  \\ \hline
$\StegObject$ & Stego object  \\ \hline
$\Msg$ & Original message  \\ \hline
$\CodedMsg$ & Coded message  \\ \hline
$P_D$ & Probability of steganalysis detection at $\mathbf{E}$  \\ \hline
$\TotalProbErrorEavesdropper$ & Probability of steganalysis error at $\mathbf{E}$  \\ \hline
$\AvgProbErrorEavesdropper$ & Average probability of steganalysis error at $\mathbf{E}$ \\ \hline
$\ProbErrorReceiver$ & Probability of decoding error at $\mathbf{B}$ \\ \hline
$\CoverDist$ & Joint probability distribution of the cover object \\ \hline
$\CoverDistc$ & Continuous version of $\CoverDist$ \\ \hline
$\StegDist$ & Joint probability distribution of the stego object \\ \hline
$\StegDistc$ & Continuous version of $\StegDist$  \\ \hline
$\MsgDist$ & Joint probability distribution of the coded message \\ \hline
$\MsgDistc$ & Continuous version of $\MsgDist$ \\ \hline
$\CoverElement$ & $i$th cover element  \\ \hline
$\StegElement$ & $i$th stego element  \\ \hline
$\CoverElementDist$ & Probability distribution of the $i$th cover object  \\ \hline
$\CoverElementDistc$ & Continuous version of $\CoverElementDist$  \\ \hline
$\StegElementDist$ & Probability distribution of the $i$th stego object  \\ \hline
$\StegElementDistc$ & Continuous version of $\StegElementDist$ \\ \hline
$\Sigma_c$, $\Sigma_s$, $\Sigma_m$ & Covariance matrices for $\CoverDist$, $\StegDist$ and $\MsgDist$, respectively, \\
& and also valid for $\CoverDistc$, $\StegDistc$ and $\MsgDistc$, respectively \\ \hline
$\mathrm{tr}(X)$ & Trace of matrix $X$\\ \hline
  \end{tabular}
\end{table}

Elements of a cover object are assumed to be the communication channel between two entities: sender $\Sender$ and receiver $\Receiver$, and
an eavesdropper $\Eavesdropper$ monitors this channel.
Fig. \ref{fig:steg} gives the general communication model for steganography.
In this paper, we consider an innocent cover object $\CoverObject=[c_1, c_2, \dots , c_n]$
and a stego object $\StegObject=[s_1, s_2, \dots s_n]$ with probability distributions $\CoverDist$ and $\StegDist$, respectively,
where $n$ is the number of elements in the cover and stego objects and
$c_i,s_j \in \AlphabetCovSteg$ where $\AlphabetCovSteg$ is the set of allowed cover values.
For example, if the cover object is a gray-scale image, then $\AlphabetCovSteg = \{0,1, \dots ,255\}$ is the set of possible pixel values.
Note that $\AlphabetCovSteg$ is not limited to pixels, but can also be, for example, discrete cosine transform (DCT) coefficients \cite{dccoef}
or video motion vectors \cite{aly.2011,mvs}.
Similarly, we consider a message $\Msg \in \AlphabetMsg^{k}$, where $k$ is the number of message elements and $\AlphabetMsg$ is the set of possible
message values.
The stego object $\StegObject$ is obtained by embedding a coded message $\CodedMsg \in \AlphabetCodedMsg^{n}$ via an addition process, i.e.
$\StegObject = \CoverObject + \CodedMsg$.
The coded message $\CodedMsg$ is obtained from $\Msg$ using a codebook that maps $\Msg \mapsto \CodedMsg$ and has
the same probability distribution as $\MsgDist$.

\begin{figure}
  \centering
  \resizebox{0.5\textwidth}{!}{\input{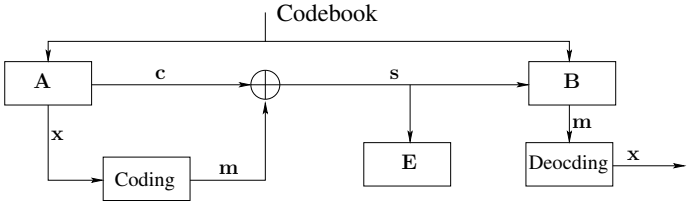}}
  \caption{The steganography communication model.}
  \label{fig:steg}
\end{figure}

\subsection{Problem Statement}\label{subsec:problem}

Consider the scenario in which $\Sender$ has access to a multimedia cover object $\CoverObject$ drawn from a distribution $\CoverDist$ which is known
by both $\Receiver$ and $\Eavesdropper$.
To conceal the secret information within $\CoverObject$, $\Sender$ transmits a perturbed version of $\CoverObject$ called the stego object $\StegObject$
and the perturbations are known to only $\Receiver$.
The goal is for $\Sender$ to minimize the probability of detection, $P_D$, by $\Eavesdropper$.

The fundamental problem is then to quantify for a given $\CoverDist$ how much information can be exchanged reliably between $\Sender$ and $\Receiver$
while maintaining a given $P_D$ by $\Eavesdropper$.
Since $\Sender$ does not know the detection strategy at $\Eavesdropper$, the transmission strategy should be designed
considering an optimal detector at $\Eavesdropper$.
In addition, no prior knowledge of the exact realization of the cover object is assumed at $\Receiver$ or $\Eavesdropper$.
In other words, the cover object realization is selected at random by $\Sender$.

We assume that $\Eavesdropper$ employs a binary hypothesis test for $\mathcal{H}_0$ and $\mathcal{H}_1$
where $\mathcal{H}_0$ means $\Sender$ has not embedded a secret message and $\mathcal{H}_1$ means $\Sender$ has embedded a secret message.
Thus, there are two types of errors.
\begin{itemize}
  \item Type $\mathbf{I}$ error: Deciding $\mathcal{H}_1$ when $\mathcal{H}_0$ is true, called false positive (false alarm).
  We denote the probability of this type of error by $\alpha$.
  \item Type $\mathbf{II}$ error: Deciding $\mathcal{H}_0$ when $\mathcal{H}_1$ is true, called false negative (miss detection).
  We denote the probability of this type of error by $\beta$.
\end{itemize}

Assuming equal priors for the optimal hypothesis test at $\Eavesdropper$,
the total probability of error is
\begin{equation}\label{eq.pe_pd}
  \TotalProbErrorEavesdropper = 1- P_D = \alpha + \beta = 2\AvgProbErrorEavesdropper,
\end{equation}
where $\AvgProbErrorEavesdropper$ is the average probability of error for $\Eavesdropper$ for equal priors.
According to \cite{lehmann2006testing}
\begin{equation}\label{eq.pe}
  \TotalProbErrorEavesdropper = 1 - \mathcal{V}(\StegDist,\CoverDist),
\end{equation}
where $\mathcal{V}(\StegDist,\CoverDist)$ is the total variation distance between $\StegDist$ and $\CoverDist$ defined as \cite{lehmann2006testing}
\begin{equation}\label{eq.v}
  \mathcal{V}(\StegDist,\CoverDist) = \dfrac{1}{2}\norm{\StegDist-\CoverDist}_1,
\end{equation}
and $\norm{.}_1$ is the $L_1$ norm.
As an analytic form of the $L_1$ norm is typically complex,
the KL divergence can be employed \cite[Ch. 11]{info.th.book} to obtain
\begin{equation}\label{eq.klv}
  \mathcal{V}(\StegDist,\CoverDist) \leq \sqrt{\dfrac{1}{2} \mathcal{D}(\StegDist \parallel \CoverDist)},
\end{equation}
where
\begin{align}\label{eq.kl}
 \mathcal{D}(\StegDist \parallel \CoverDist) & = \sum_{j=1}^{\dot{n}} P_{s_j} \ln \frac{P_{s_j}}{P_{c_j}},
\end{align}
and $P_{s_j}$ and $P_{c_j}$ are the probability mass function bins for the cover and stego objects, respectively,
and $\dot{n}$ is the number of bins.
To guarantee a low detection probability at $\Eavesdropper$, $\Sender$ must bound $\mathcal{V}(\StegDist,\CoverDist)$ by some $\epsilon$
based on the desired probability of detection.
From (\ref{eq.klv}), $\Sender$ should design a steganographic technique such that
\begin{equation}
\label{eq.c1}
2 \times   \mathcal{V}(\StegDist,\CoverDist)^2 \leq \mathcal{D}(\StegDist \parallel \CoverDist) \leq 2\epsilon^2.
\end{equation}
On the other hand, $\Receiver$ must decode the message $\Msg$ from $\CodedMsg$ sent by $\Sender$.
Thus, the codebook must be selected to maximize the mutual information between $\StegDist$ and $\MsgDist$
\begin{equation}\label{eq.mi}
  I(\MsgDist ; \StegDist) = H(\MsgDist) - H(\MsgDist\mid \StegDist),
\end{equation}
where $H(\MsgDist)$ is the entropy of $\MsgDist$ and $H(\MsgDist\mid \StegDist)$ is the conditional entropy of $\MsgDist$ given $\StegDist$.
Thus, the proposed constrained optimization problem can be written as
\begin{equation}\label{eq.constrain_org}
  \underset{\MsgDist}{\mathrm{argmax}} \quad I(\MsgDist ; \StegDist) \qquad  \mathrm{s.t.} \qquad  \mathcal{D}(\StegDist \parallel \CoverDist) \leq 2\epsilon^2.
\end{equation}
Note that $I(\MsgDist ; \StegDist)$ in (\ref{eq.mi}) can be written as~\cite[Ch. 2]{info.th.book}
\begin{equation}\label{eq.mi2}
  I(\MsgDist ; \StegDist) = H(\StegDist) - H(\StegDist\mid \MsgDist) = H(\StegDist) - H(\CoverDist),
\end{equation}
and because $H(\CoverDist)$ is known, maximizing $I(\MsgDist ; \StegDist)$ is equivalent to maximizing $H(\StegDist)$.
Thus, the proposed optimization problem in \eqref{eq.constrain_org} can be rewritten as
\begin{equation}\label{eq.constrain}
\underset{\MsgDist}{\mathrm{argmax}} \quad H(\StegDist) \qquad  \mathrm{s.t.} \qquad  \mathcal{D}(\StegDist \parallel \CoverDist) \leq  2\epsilon^2.
\end{equation}

\section{Stochastic Multimedia Model}\label{sec:MultimediaStoch}

In this section, the multimedia distribution model, $\CoverDist$, is given.
We first define Markov Random Field (MRF) and Gibbs distribution.

MRF is a multidimensional random process which generalizes the single dimensional Markov random process \cite{bovik3}.
Let $\ddot{X}$ be a coordinate system in $R^N$ and $\rho(i)$ a function representing the neighbourhood of element $i\in \ddot{X}$
such that $i \notin \rho(i)$ and $i \in \rho(j) \mbox{ iff } j \in \rho(i)$.
For example, the neighbourhood can be defined as the immediate left, right, top and bottom neighbours of $i$.
Let $\ddot{Y}$ be the neighbourhood system representing the set of neighbourhoods of all elements in $\ddot{X}$.

A random field $\xi$ over $\ddot{X}$ is a multidimensional random process where each element $i\in \ddot{X}$ is assigned a random variable $\xi_i$ and
$\mathfrak{f}_i$ is the associated realization for all $i \in \ddot{X}$.
$\xi$ is called a MRF if it satisfies the following conditions
\begin{itemize}
\item Positivity: $P(\xi=\mathfrak{f}) > 0$, $\forall \mathfrak{f} \in \ddot{S}$, and
\item Markovianity: $P(\xi_i=\mathfrak{f}_i | \xi_j=\mathfrak{f}_j , \forall j \neq i) = P(\xi_i=\mathfrak{f}_i | \xi_j=\mathfrak{f}_j , \forall j \in \rho(i) ), \forall i \in \ddot{X}, \forall \mathfrak{f} \in \ddot{S}$,
\end{itemize}
where $P$ is the probability measure and $\ddot{S}$ is the state space for $\xi$.

The Hammersley-Clifford Theorem \cite{bovik3} states that $\xi$ is a MRF over $\ddot{X}$ with respect to $\ddot{Y}$ if and only if its probability distribution
follows the Gibbs distribution with respect to $\ddot{X}$ and $\ddot{Y}$.
To explain the Gibbs distribution, a clique must be defined.
A clique $\omega$ is a correlated group of neighbouring elements where $\omega\subset\ddot{X}$ with respect to $\ddot{Y}$, i.e.
$\omega$ consists of a single element or multiple elements which are neighbours.
Further, $\omega \in \ddot{\Omega}$ where $\ddot{\Omega}$ denotes the set of all cliques.

According to \cite{gibbs1984, gibbs1993, bovik3}, multimedia objects can be modeled using the Gibbs distribution.
This implies the above described properties (positivity and Markovianity), so the elements of a multimedia object can be organized in cliques.
The Gibbs distribution employs an energy function $U_\mathfrak{f}$ and a temperature constant $\hat{T}$ to obtain the probability
\begin{equation}
P(\xi=\mathfrak{f}) = \frac{1}{Z} e^{- U_\mathfrak{f} / \hat{T}},
\label{eq.gibbs}
\end{equation}
where
\begin{equation}
Z = \sum_{k=1}^{M}{e^{- U_k / \hat{T}}},
\end{equation}
is the normalization denominator called the partition function and $M$ is the number of system states (number of elements in $\ddot{S}$).
The energy function is modeled as
\begin{equation}\label{eq.gibbs.energy.func}
U_\mathfrak{f} = \sum_{\omega \in \ddot{\Omega}} V_\mathfrak{f}(\omega),
\end{equation}
where $V_\mathfrak{f}(\omega)$ is the potential function for $\mathfrak{f}$ calculated only within the clique $\omega \in \ddot{\Omega}$.

The Gibbs distribution requires the partitioning function which is intractable to obtain for most models.
Therefore, a CMQGD is used here to model the Gibbs distribution.
This is implemented by modeling \eqref{eq.gibbs} with a CMQGD by modeling $V_f(\omega)$ in \eqref{eq.gibbs.energy.func} as
\begin{equation}\label{eq.PotFun}
V_\mathfrak{f}(\omega) = \frac{\hat{T}}{2}(\mathfrak{f}_\omega - \mu_\omega)\Sigma_\omega^{-1}(\mathfrak{f}_\omega - \mu_\omega)^T,
\end{equation}
where $\mathfrak{f}_\omega$ is the realization of the random field $\xi$ within clique $\omega$,
i.e. $\mathfrak{f}_i \in \mathfrak{f}_\omega \quad \forall i \in \omega$, $\mathfrak{f}_\omega \subset \mathfrak{f}$, and
$\mu_\omega$ and $\Sigma_\omega$ are the mean vector of $\xi_\omega \subset \xi$ and the corresponding covariance matrix within the clique $\omega$, respectively.
This equivalence is valid for a wide class of Gibbs distributions in which the potential function can be expressed in a quadratic form of the exponent of an equivalent
Gaussian distribution.
This is based on the following.
\begin{itemize}
\item $V_\mathfrak{f}(\omega)$ can be mapped to any function using only the elements within the clique $\omega$ \cite{bovik3}.
\item The product of $(\mathfrak{f}_\omega - \mu_\omega)$ and $(\mathfrak{f}_\omega - \mu_\omega)^T$ (the distance from the mean for clique $\omega$),
controls the probability of a given realization (state) $\mathfrak{f}_\omega$,
and hence whether $P(\xi=\mathfrak{f})) \in \omega$ for a given $\Sigma_\omega$.
In other words, the local coherence condition of multimedia elements, $P(\xi=\mathfrak{f}) \propto -U_\mathfrak{f} \propto -V_\mathfrak{f}(\omega)$,
is satisfied.
This results in low energy states having a higher probability than high energy states, which is one of the properties of the Gibbs distribution \cite{gibbs1984}.
\item $Z$, which is the normalization denominator, is equal to
\[
\sqrt{2\pi}|\Sigma_\omega| = \sum_{\omega\in \ddot{\Omega}} e^{-\sum{\frac{\hat{T}}{2}(\mathfrak{f}_\omega
- \mu_\omega)\Sigma_\omega^{-1}(\mathfrak{f}_\omega - \mu_\omega)^T}}.
\]
\end{itemize}
It should be noted that other assumptions for the potential function $V_\omega(\mathfrak{f})$ can be used to model the Gibbs distribution.

\section{Main Results}\label{sec:proof}
In this section, the solution of the optimization problem~\eqref{eq.constrain} in Section~\ref{sec:model} is presented using the assumptions given below.
\begin{itemize}
\item Each clique within the cover is considered as a single cover element $\CoverElement$ with unknown distribution $\CoverElementDist$.
\item The equivalence relation between Gibbs and Gaussian distributions in Section \ref{sec:MultimediaStoch} is used to
model the cover object, i.e. the joint distribution of $\CoverDist$, as a CMQGD with mean $\vec{\mu_c}$ and covariance matrix $\Sigma_c$.
No assumptions are made on the marginal distribution $\CoverElementDist$ of $\CoverElement$.
\item The cover statistical parameters ($\vec{\mu_c}$,$\Sigma_c$) are known to $\Sender$, $\Receiver$, and $\Eavesdropper$.
\item Although no assumptions are made on $\StegDist$ and $\MsgDist$,
minimizing the KL divergence for a Gaussian distributed cover object requires that the corresponding stego object also be Gaussian distributed \cite{gaussisgauss}.
Thus, the embedding operation will produce a CMQGD $\StegDist$ with parameters ($\vec{\mu_s}$, $\Sigma_s$).
Although the zero mean Gaussian distribution case is considered in \cite{gaussisgauss}, the proof can be extended to any Gaussian distribution
as shown in Appendix A.
\item The codebook realization from $\MsgDist$, i.e. the codewords used between $\Sender$ and $\Receiver$,
is shared only between $\Sender$ and $\Receiver$ and is not known to $\Eavesdropper$.
\end{itemize}

\subsection{Solution of \eqref{eq.constrain}}\label{subsec.solution}
To solve~\eqref{eq.constrain}, we must determine $H(\CoverDist)$, $H(\StegDist)$ and the KL divergence $\mathcal{D}(\StegDist \parallel \CoverDist)$.
First, the KL divergence between two quantized distributions is evaluated using \eqref{eq.kl.epsilon} and the following lemma.
\begin{lemma}\cite[Ch. 11]{info.th.book}\label{TH1}
The KL divergence between two uniformly quantized distributions $F(\dot{X})$ and $G(\dot{X})$ is bounded as
	\begin{equation}\label{eq.th1}
	\mathcal{D}(F(\dot{X}) \parallel G(\dot{X})) \leq \mathcal{D}(f(\dot{x}) \parallel g(\dot{x}))
	\end{equation}
where $f$ and $g$ are continuous versions of $F$ and $G$, respectively, and $\dot{X}$ is the uniformly quantized version of the continuous random variable $\dot{x}$.
\end{lemma}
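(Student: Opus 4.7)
The plan is to reduce the inequality to a single application of the log-sum inequality (equivalently, Jensen's inequality applied to the convex map $u \mapsto u \ln u$). The first step is bookkeeping: let $\{B_j\}_{j=1}^{\dot n}$ be the uniform quantization bins of width $\Delta$, and write the probability masses of the quantized distributions as
\begin{equation}\nonumber
P_j = \int_{B_j} f(\dot x)\, d\dot x, \qquad Q_j = \int_{B_j} g(\dot x)\, d\dot x,
\end{equation}
so that $\mathcal{D}(F(\dot X)\parallel G(\dot X)) = \sum_{j=1}^{\dot n} P_j \ln (P_j/Q_j)$, while $\mathcal{D}(f(\dot x)\parallel g(\dot x)) = \int f(\dot x)\ln (f(\dot x)/g(\dot x))\, d\dot x$.

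Next I would apply the integral form of the log-sum inequality on each bin $B_j$. Using the convexity of $u\mapsto u\ln u$ and Jensen's inequality with $g(\dot x)\,d\dot x / Q_j$ as the reference measure on $B_j$ gives
\begin{equation}\nonumber
\int_{B_j} f(\dot x)\ln\frac{f(\dot x)}{g(\dot x)}\, d\dot x \;\geq\; P_j \ln\frac{P_j}{Q_j},
\end{equation}
for every bin. Summing over $j$, the left-hand side telescopes into the continuous KL divergence $\mathcal{D}(f\parallel g)$ and the right-hand side is exactly $\mathcal{D}(F(\dot X)\parallel G(\dot X))$, yielding the claim.

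The technical obstacle is not algebraic but measure-theoretic: one must handle bins where $Q_j = 0$ (and therefore $g\equiv 0$ a.e.\ on $B_j$) to avoid undefined $0/0$ or division by zero. The standard convention $0\ln(0/0)=0$ and the fact that absolute continuity of $F$ with respect to $G$ (required for the right-hand side to be finite) forces $P_j = 0$ whenever $Q_j = 0$, resolves this cleanly. A secondary subtlety is the assumption of \emph{uniform} quantization: it ensures that the same bin partition is used for $F$ and $G$, which is what makes the per-bin log-sum inequality applicable in the first place. Once these edge cases are dispatched, the proof is essentially a one-line consequence of Jensen's inequality.
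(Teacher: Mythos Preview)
Your argument is correct and is precisely the standard log-sum/Jensen proof of the data-processing inequality for relative entropy under coarse-graining, which is what the paper defers to by citing \cite[Ch.~11]{info.th.book} rather than proving the lemma in-text. There is nothing to compare: the paper gives no independent proof, and your per-bin application of the log-sum inequality (with the absolute-continuity caveat for $Q_j=0$) is exactly the argument found in the cited reference.
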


From Lemma~\ref{TH1} and the definition of KL divergence for a MCGD \cite{limits2}, the KL divergence for a CMQGD can be bounded as
\begin{align}\label{eq.kl:qgd}
\mathcal{D}(\StegDist \parallel \CoverDist) & \leq \frac{1}{2} \bigg(\mathrm{tr}(\Sigma_c^{-1} \Sigma_s) + (\vec{\mu_c} - \vec{\mu_s})^T
\Sigma_c^{-1} (\vec{\mu_c} - \vec{\mu_s}) \nonumber \\
& + \ln \frac{|\Sigma_c|}{|\Sigma_s|} - n\bigg),
\end{align}
where $n$ is the number of cover elements.
From \eqref{eq.kl:qgd}, let $\ddot{A}=\Sigma^{-1}_c \Sigma_s$, then we have
\begin{equation*}
|\ddot{A}| = |\Sigma^{-1}_c| |\Sigma_s| = \frac{|\Sigma_s|}{|\Sigma_c|}.
\end{equation*}
As $\Sigma^{-1}_c$ and $\Sigma_s$ are symmetric and positive semi-definite
\begin{align*}
|\ddot{A}| &= \prod\lambda_i, \quad \mathrm{tr}(\ddot{A}) = \sum\lambda_i,
\end{align*}
and $\mathrm{tr}(\ddot{A})$ is monotonically increasing with $|\Sigma_s|$
where $\sum\lambda_i$ and $\prod\lambda_i$ are the sum and product of the eigenvalues of $\ddot{A}$, respectively.
Then from the arithmetic-geometric inequality we have
\begin{equation*}
\frac{1}{\ddot{n}}\mathrm{tr}(\ddot{A}) \geq |\ddot{A}|^{\frac{1}{\ddot{n}}},
\end{equation*}
where $\ddot{n}$ is the number of eigenvalues.
As $x>\ln(x)$ for any positive real number $x$
\begin{align}\label{proof.D.Sigma}
\frac{1}{\ddot{n}}\mathrm{tr}(\ddot{A}) &> \frac{1}{\ddot{n}}\ln(|\ddot{A}|),\nonumber \\
\mathrm{tr}(\ddot{A}) &> \ln(|\ddot{A}|),\nonumber \\
\mathrm{tr}(\Sigma_c^{-1} \Sigma_s) &> \ln \left(\frac{|\Sigma_s|}{|\Sigma_c|}\right),
\end{align}
so \eqref{eq.kl:qgd} can be rewritten as
\begin{align}\label{eq.kl:qgd.new}
\mathcal{D}(\StegDist \parallel \CoverDist) &\leq \frac{1}{2} \bigg(\mathrm{tr}(\Sigma_c^{-1} \Sigma_s) + (\vec{\mu_c} - \vec{\mu_s})^T \Sigma_c^{-1} (\vec{\mu_c} - \vec{\mu_s}) \nonumber \\
 &- \ln \frac{|\Sigma_s|}{|\Sigma_c|} - n\bigg).
\end{align}
Then from \eqref{eq.th1}, \eqref{proof.D.Sigma}, and \eqref{eq.kl:qgd.new},
$\mathcal{D}(\StegDist \parallel \CoverDist)$ and $\mathcal{D}(\StegDistc \parallel \CoverDistc)$ are monotonically increasing with $|\Sigma_s|$,
and from \eqref{eq.ent.mqgd} and \eqref{eq.ent.ps}, $H(\StegDist)$ and $h(p_s)$ are also monotonically increasing with $|\Sigma_s|$,
where $h(p_s)$ is the differential entropy of the continuous CMQGD $p_s$ and $\Sigma_s$ is the only variable in \eqref{eq.ent.ps}
as $2\pi e$ and $b$ can be considered as constants.

To evaluate the optimization problem in \eqref{eq.constrain},
we use the upper bound on $\mathcal{D}(\StegDist \parallel \CoverDist)$,
i.e. $\mathcal{D}(\StegDist \parallel \CoverDist) = \mathcal{D}(\StegDistc \parallel \CoverDistc))$, so that
\begin{align}\label{eq.kl.epsilon}
\mathcal{D}(\StegDistc \parallel \CoverDistc)& = \frac{1}{2} \bigg(\mathrm{tr}(\Sigma_c^{-1} \Sigma_s) + (\vec{\mu_c} - \vec{\mu_s})^T \Sigma_c^{-1} (\vec{\mu_c} - \vec{\mu_s})\\
&+ \ln \frac{|\Sigma_c|}{|\Sigma_s|} - n\bigg).
\end{align}
Then, substituting \eqref{eq.kl.epsilon} in \eqref{eq.constrain}, the Lagrangian of \eqref{eq.constrain} is
\begin{equation}\label{eq.lag}
\mathcal{L}(\StegDistc, \lambda) = \mathcal{L}(\vec{\mu_s}, \Sigma_s, \lambda) = h(\StegDistc) - \lambda(\mathcal{D}(\StegDistc \parallel \CoverDistc) - 2\epsilon^2),
\end{equation}
where $\lambda$ is the Lagrange multiplier.

The solution of the optimization problem in~\eqref{eq.lag} is given in the following theorem.
\begin{theorem}\label{TH2}
With the given assumptions, the solution of \eqref{eq.lag} is obtained as follows.
\begin{enumerate}
\item The distribution of $\MsgDist$ must be CMQGD for both the cover and stego objects, with mean and covariance matrix
	\begin{enumerate}
	\item $\vec{\mu_m} = 0$, and
	\item $\Sigma_m = \Big( -W(-\frac{4\epsilon^2}{n} -1 -i\pi) -1 \Big)\Sigma_c$,
	\end{enumerate}
for steganalysis with $P_D \leq \sqrt{\frac{\mathcal{D}(\StegDist \parallel \CoverDist)}{2}} \leq \epsilon)$
where $W(.)$ is the WrightOmega function \cite{WrightOmega}.
\item The maximum achievable embedding rate is
\[
I(\StegDist;\MsgDist) = \frac{n}{2}\ln(-W(-\frac{4\epsilon^2}{n} -1 -i\pi)),
\]
for sufficiently large $n$.
\end{enumerate}
 \end{theorem}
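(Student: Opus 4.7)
The plan is to treat the Lagrangian \eqref{eq.lag} as an unconstrained matrix optimization in the unknowns $(\vec{\mu_s}, \Sigma_s)$ and solve the associated KKT stationarity conditions, recovering $\Sigma_m$ and $\vec{\mu_m}$ at the end via the embedding relation $\StegObject = \CoverObject + \CodedMsg$. Under the CMQGD model both the objective, $h(\StegDistc) = \frac{1}{2}\ln\bigl((2\pi e)^n |\Sigma_s|\bigr)$ up to the quantization bin constant, and the KL term \eqref{eq.kl.epsilon} are closed-form in $(\vec{\mu_s}, \Sigma_s)$, so the derivation is driven by standard matrix-calculus identities such as $\partial_{\Sigma_s}\ln|\Sigma_s| = \Sigma_s^{-1}$ and $\partial_{\Sigma_s}\mathrm{tr}(\Sigma_c^{-1}\Sigma_s) = \Sigma_c^{-1}$.

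First I would differentiate the Lagrangian with respect to $\vec{\mu_s}$; only the quadratic form in \eqref{eq.kl.epsilon} contributes, and the stationarity condition collapses to $\vec{\mu_s} = \vec{\mu_c}$, giving $\vec{\mu_m} = 0$ by independence of $\CoverObject$ and $\CodedMsg$. Next, setting $\partial_{\Sigma_s}\mathcal{L} = 0$ yields the matrix identity $(1+\lambda)\Sigma_s^{-1} = \lambda \Sigma_c^{-1}$, which forces $\Sigma_s$ and $\Sigma_c$ to be proportional, $\Sigma_s = k\Sigma_c$, with scalar $k = (1+\lambda)/\lambda$. Substituting $(\vec{\mu_c}, k\Sigma_c)$ back into the binding KL constraint reduces the matrix terms to scalars via $\mathrm{tr}(\Sigma_c^{-1}(k\Sigma_c)) = kn$ and $\ln(|\Sigma_c|/|k\Sigma_c|) = -n\ln k$, producing the transcendental equation
\begin{equation*}
k - \ln k = 1 + \frac{4\epsilon^2}{n}.
\end{equation*}

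Inverting this equation is the step I expect to be fiddliest. The function $k\mapsto k-\ln k$ attains its minimum $1$ at $k=1$ and has two positive roots for every admissible $\epsilon$; the branch with $k > 1$ is the one selected by entropy maximization, consistent with the monotonicity of $|\Sigma_s|$ established in \eqref{proof.D.Sigma}. Writing $y = -k$ and using the principal-branch identity $\ln(-k) = \ln k - i\pi$ turns the equation into $y + \ln y = -1 - \frac{4\epsilon^2}{n} - i\pi$, which is precisely the defining relation of the WrightOmega function, delivering $k = -W\bigl(-\frac{4\epsilon^2}{n} - 1 - i\pi\bigr)$. The claimed $\Sigma_m$ then follows from $\Sigma_m = \Sigma_s - \Sigma_c = (k-1)\Sigma_c$, and the rate formula follows from \eqref{eq.mi2}: since $H(\StegDist) - H(\CoverDist) = \frac{1}{2}\ln(|\Sigma_s|/|\Sigma_c|) = \frac{n}{2}\ln k$, substituting the expression for $k$ yields the theorem. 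The CMQGD form of $\MsgDist$ is then automatic, because $\CodedMsg = \StegObject - \CoverObject$ with $\CoverObject \perp \CodedMsg$ gives $\Sigma_m = (k-1)\Sigma_c \succ 0$ together with the Gaussian structure inherited from $\CoverDist$ and $\StegDist$, and the ``sufficiently large $n$'' qualifier is the standard asymptotic reliability condition whose formal justification is deferred to Section~\ref{sec:ach}.
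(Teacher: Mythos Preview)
Your proposal is correct and follows essentially the same route as the paper: both set $\partial_{\vec{\mu_s}}\mathcal{L}=0$ and $\partial_{\Sigma_s}\mathcal{L}=0$, obtain $\vec{\mu_s}=\vec{\mu_c}$ and $\Sigma_s=a\Sigma_c$ with $a=(1+\lambda)/\lambda$, substitute into the binding KL constraint to reach $a-\ln a = 1+4\epsilon^2/n$, and invert via the WrightOmega function before reading off $I=\frac{n}{2}\ln a$. The only cosmetic differences are that the paper derives the CMQGD form of $\MsgDist$ through characteristic functions ($\psi_m=\psi_s/\psi_c$) rather than your direct variance-subtraction argument, and it cites \cite{WrightOmega} for the inversion step where you spell out the $y=-k$ substitution and branch selection explicitly.
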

\begin{proof}
To find the optimal parameters for \eqref{eq.lag},
set $\frac{\partial}{\partial \vec{\mu_s}}\mathcal{L} = 0$ and $\frac{\partial}{\partial \Sigma_s}\mathcal{L} = 0$.
With the definitions of $h(\StegDistc)$ and $\mathcal{D}(\StegDist \parallel \CoverDist)$ in \eqref{eq.ent.mqgd} in Appendix B and \eqref{eq.kl.epsilon},
respectively, we have
\begin{align}
\label{eq.start1} \frac{\partial}{\partial \vec{\mu_s}} h(\StegDistc) & =  0 , \\
\label{eq.start2} \frac{\partial}{\partial \Sigma_s} h(\StegDistc) & =  \frac{1}{2}(\Sigma_s^{-1})^T ,\\
\label{eq.start3.5} \frac{\partial}{\partial \vec{\mu_s}} \mathcal{D}(\StegDistc \parallel \CoverDistc) & =  \frac{-1}{2}\Big(\Sigma_c^{-1} (\vec{\mu_c} - \vec{\mu_s}) + (\Sigma_c^{-1})^T (\vec{\mu_c} - \vec{\mu_s}) \Big), \\
\label{eq.start4} \frac{\partial}{\partial \Sigma_s} \mathcal{D}(\StegDistc \parallel \CoverDistc) & = \frac{1}{2} \big( \Sigma_c^{-1} - \Sigma_s^{-1} \big).
\end{align}
As $\Sigma_c$ and $\Sigma_s$ are symmetric matrices, \eqref{eq.start3.5} can be rewritten as
\begin{align}
\label{eq.start3} \frac{\partial}{\partial \vec{\mu_s}} \mathcal{D}(\StegDistc \parallel \CoverDistc) & =  -\Sigma_c^{-1} (\vec{\mu_c} - \vec{\mu_s}).
\end{align}
From \eqref{eq.start1} and \eqref{eq.start3}, as $\frac{\partial}{\partial \vec{\mu_s}}\mathcal{L} = 0$, we have
\begin{equation}\label{eq.result1}
\vec{\mu_s} = \vec{\mu_c}.
\end{equation}
Further, from \eqref{eq.start2} and \eqref{eq.start4}, as $\frac{\partial}{\partial \Sigma_s}\mathcal{L} = 0$, we have
\begin{align}\label{eq.result2.a}
(\lambda +1) \Sigma_s^{-1} &= \lambda \Sigma_c^{-1}, \nonumber \\
\Sigma_s &= \frac{\lambda +1}{\lambda} \Sigma_c, \nonumber \\
\Sigma_s &= a \Sigma_c.
\end{align}
Let $a = \frac{\lambda +1}{\lambda}$ be the embedding factor.
From~\eqref{eq.result2.a}, it is clear that $\Sigma_s$ has the same eigenvectors as $\Sigma_c$ and the
eigenvalues of $\Sigma_s$ are a scaled version of the eigenvalues of $\Sigma_c$ by the embedding factor $a$.

Let the characteristic functions of the cover and stego objects be
\begin{align*}
\psi_{c}(\vec{t}) &= e^{i\vec{\mu_c}^T \vec{t} + \frac{1}{2}\vec{t}^T\Sigma_c \vec{t}}, \\
\psi_{s}(\vec{t}) &= e^{i\vec{\mu_s}^T \vec{t} + \frac{1}{2}\vec{t}^T\Sigma_s \vec{t}}.
\end{align*}
As $\StegObject = \CoverObject + \CodedMsg$, this implies $\StegDist = \CoverDist \oplus \MsgDist$, where $\oplus$ denotes convolution.
Thus, the characteristic function for the message codebook distribution can be expressed as
\begin{align*}
\psi_{m}(\vec{t}) &=  \frac{\psi_{s}(\vec{t})}{\psi_{c}(\vec{t})} = e^{i(\vec{\mu_s}^T-\vec{\mu_c}^T)\vec{t}+ \frac{1}{2}\vec{t}^T (\Sigma_s - \Sigma_c) \vec{t}}.
\end{align*}
Then from \eqref{eq.result1}, we have
\begin{equation}\label{eq.ch_m}
	\psi_{m}(\vec{t}) = e^{ \frac{1}{2}\vec{t}^T(\Sigma_s - \Sigma_c) \vec{t}} = e^{ \frac{1}{2}\vec{t}^T(\Sigma_m) \vec{t}}.
\end{equation}
and
\begin{equation} \label{eq.mum2}
\mu_m = 0,
\end{equation}
so the message codebook can be modeled as a CMQGD with zero mean and covariance matrix
\begin{align}
\Sigma_m & = \Sigma_s - \Sigma_c  \nonumber \\
				& = ( a -1 )\Sigma_c.\label{eq.sigmam2}
\end{align}

The embedding factor $a$ is obtained as follows.
From \eqref{eq.result2.a}, we have
\begin{equation}\label{eq.result2.b}
\mathrm{tr}(\Sigma_c^{-1} \Sigma_s) = na,
\end{equation}
and
\begin{equation}\label{eq.result2}
\ln \frac{|\Sigma_c|}{|\Sigma_s|} = -n \ln (a).
\end{equation}
Substituting \eqref{eq.result1}, \eqref{eq.result2.b} and \eqref{eq.result2} in \eqref{eq.kl.epsilon} gives
\begin{align}\label{eq.result3}
a n - n \ln(a) &= 2 \mathcal{D}(\StegDistc \parallel \CoverDistc) +n, \nonumber \\
a -\ln(a) &=\frac{2}{n} \mathcal{D}(\StegDistc \parallel \CoverDistc) +1.
\end{align}
The solution of \eqref{eq.result3} can be obtained from \cite{WrightOmega} as
\begin{equation}\label{eq.result4}
a = -W(-\frac{2}{n} \mathcal{D}(\StegDistc \parallel \CoverDistc) -1 -i\pi).
\end{equation}
Then, substituting the upper bound $2\epsilon^2$ for $\mathcal{D}(\StegDistc \parallel \CoverDistc)$ from \eqref{eq.lag} in \eqref{eq.result4}, we have
\begin{equation}\label{eq.result4m}
a =-W(-\frac{4\epsilon^2}{n} -1 -i\pi),
\end{equation}
Equations \eqref{eq.mum2}, \eqref{eq.sigmam2} and \eqref{eq.result4m} complete the proof of part 1).

For part 2), from \eqref{eq.mi2}, and \eqref{eq.ent.pc} and \eqref{eq.ent.ps} in Appendix B, we have
\begin{align}
I(\StegDistc;\MsgDistc)	&=		  h(\StegDistc) - h(\CoverDistc).		\nonumber
\end{align}
Then from \eqref{eq.ent.mqgd} in Appendix B
\begin{align}
I(\StegDistc;\MsgDistc) &=  \frac{1}{2}\ln(2\pi e|\Sigma_s|)- \frac{1}{2}\ln(2\pi e|\Sigma_c|). \nonumber
\end{align}
and from \eqref{eq.sigmam2} and \eqref{eq.result4m}
\begin{align}\label{eq.I} 
I(\StegDistc;\MsgDistc)	&= \frac{n}{2}\ln(a) + \frac{1}{2}\ln(2\pi e|\Sigma_c|)- \frac{1}{2}\ln(2\pi e|\Sigma_c|) \nonumber \\
			&=		  \frac{n}{2}\ln(a).
\end{align}
Thus
\begin{equation}\label{eq.main}
I(\StegDistc;\MsgDistc) = \frac{n}{2}\ln\Big(-W(-\frac{4\,\epsilon^2}{n} -1 -i\pi)\Big).
\end{equation}
which completes the proof of part 2).
\end{proof}

Note that although $\mathcal{D}(\StegDistc \parallel \CoverDistc) \neq \mathcal{D}(\CoverDistc \parallel \StegDistc)$,
Theorem \ref{TH2} is also valid for $\mathcal{D}(\CoverDistc \parallel \StegDistc)$.
The proof of this is given in the following lemma.

\begin{lemma}\label{TH3}
$\vec{\mu_m}$ and $\Sigma_m$ in Theorem \ref{TH2} are also valid for $\mathcal{D}(\CoverDistc \parallel \StegDistc)$.
\end{lemma}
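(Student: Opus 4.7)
The plan is to repeat the Lagrangian analysis of Theorem~\ref{TH2} with $\mathcal{D}(\StegDistc\parallel\CoverDistc)$ replaced by $\mathcal{D}(\CoverDistc\parallel\StegDistc)$ and to verify that the first-order conditions still force $\vec{\mu_s}=\vec{\mu_c}$ and $\Sigma_s$ proportional to $\Sigma_c$, so that $\vec{\mu_m}$ and $\Sigma_m$ retain the forms asserted in Theorem~\ref{TH2}. First, I would write the reverse KL divergence for two CMQGDs as $\mathcal{D}(\CoverDistc\parallel\StegDistc)=\tfrac{1}{2}\bigl(\mathrm{tr}(\Sigma_s^{-1}\Sigma_c)+(\vec{\mu_s}-\vec{\mu_c})^T\Sigma_s^{-1}(\vec{\mu_s}-\vec{\mu_c})+\ln\tfrac{|\Sigma_s|}{|\Sigma_c|}-n\bigr)$, form the Lagrangian $\mathcal{L}(\vec{\mu_s},\Sigma_s,\lambda)=h(\StegDistc)-\lambda(\mathcal{D}(\CoverDistc\parallel\StegDistc)-2\epsilon^2)$, and compute $\partial\mathcal{L}/\partial\vec{\mu_s}$ and $\partial\mathcal{L}/\partial\Sigma_s$. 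Since $h(\StegDistc)$ is independent of $\vec{\mu_s}$ and the quadratic form contributes $\Sigma_s^{-1}(\vec{\mu_s}-\vec{\mu_c})$, setting the mean gradient to zero immediately yields $\vec{\mu_s}=\vec{\mu_c}$ and hence $\vec{\mu_m}=0$, matching Theorem~\ref{TH2}.

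The delicate step is the derivative with respect to $\Sigma_s$: because $\Sigma_s^{-1}$ (rather than $\Sigma_s$) now sits inside the trace term, I would invoke the matrix-calculus identity $\tfrac{\partial}{\partial\Sigma_s}\mathrm{tr}(\Sigma_s^{-1}\Sigma_c)=-\Sigma_s^{-1}\Sigma_c\Sigma_s^{-1}$, and combine it with $\tfrac{\partial}{\partial\Sigma_s}\ln|\Sigma_s|=\Sigma_s^{-1}$ and the entropy gradient $\tfrac{1}{2}\Sigma_s^{-1}$ used in the proof of Theorem~\ref{TH2}. The stationarity condition then reduces to $(1-\lambda)\Sigma_s^{-1}+\lambda\Sigma_s^{-1}\Sigma_c\Sigma_s^{-1}=0$; multiplying by $\Sigma_s$ and using the symmetry of the covariance matrices collapses this to $\Sigma_s^{-1}\Sigma_c=\tfrac{\lambda-1}{\lambda}I$, i.e., $\Sigma_s=a\,\Sigma_c$ for a positive scalar $a$. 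Consequently the eigenvector-sharing property and the proportionality $\Sigma_m=(a-1)\Sigma_c$ are preserved.

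To close the argument, I would substitute $\vec{\mu_s}=\vec{\mu_c}$ and $\Sigma_s=a\Sigma_c$ back into the constraint $\mathcal{D}(\CoverDistc\parallel\StegDistc)=2\epsilon^2$, obtaining the scalar equation $\tfrac{1}{a}+\ln(a)-1=\tfrac{4\epsilon^2}{n}$. The change of variables $y=1/a$ converts this into $y-\ln(y)=\tfrac{4\epsilon^2}{n}+1$, which is exactly the transcendental equation solved in Theorem~\ref{TH2} by the WrightOmega function, so $a$ admits the same closed-form expression through $W(\cdot)$. The main obstacle is the matrix-calculus step: the two $\Sigma_s^{-1}$ factors produced by differentiating the trace must be reconciled with the single $\Sigma_s^{-1}$ from the log-determinant before a clean scalar proportionality drops out, and the symmetry and positive-definiteness of $\Sigma_c$ and $\Sigma_s$ must be invoked explicitly to legitimize the multiplication by $\Sigma_s$.
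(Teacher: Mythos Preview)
Your approach matches the paper's: write out $\mathcal{D}(\CoverDistc\parallel\StegDistc)$ for two Gaussians, compute its gradients in $\vec{\mu_s}$ and $\Sigma_s$ (the paper records exactly your derivatives, including the $-\Sigma_s^{-1}\Sigma_c\Sigma_s^{-1}$ term coming from $\mathrm{tr}(\Sigma_s^{-1}\Sigma_c)$), and re-run the Lagrangian stationarity argument of Theorem~\ref{TH2}. Where the paper simply asserts that ``the results\ldots follow,'' you spell out the stationarity condition $(1-\lambda)\Sigma_s^{-1}+\lambda\Sigma_s^{-1}\Sigma_c\Sigma_s^{-1}=0$ and the back-substitution, which is a welcome addition.

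One caution on your final paragraph: the substitution $y=1/a$ shows that $1/a$, not $a$ itself, satisfies the transcendental equation $y-\ln y=\tfrac{4\epsilon^2}{n}+1$ of Theorem~\ref{TH2}. Hence the embedding factor you obtain from the reverse divergence is the \emph{reciprocal} of the forward-KL WrightOmega value (equivalently, a different branch), not literally the same expression. Your closing claim that ``$a$ admits the same closed-form expression through $W(\cdot)$'' overstates this; what your algebra actually yields is $a=\bigl(-W(-\tfrac{4\epsilon^2}{n}-1-i\pi)\bigr)^{-1}$. The paper's own proof glosses over this distinction in the same way, but since you have gone to the trouble of deriving the scalar equation explicitly, it is worth stating its solution precisely rather than folding it back into the Theorem~\ref{TH2} formula.
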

\begin{proof} Equation \eqref{eq.kl.epsilon} can be modified as
\begin{align}\label{eq.kl:qgd.kl2}
\mathcal{D}(\CoverDistc \parallel \StegDistc) & = \frac{1}{2} \bigg(\mathrm{tr}(\Sigma_s^{-1} \Sigma_c) + (\vec{\mu_s} - \vec{\mu_c})^T \Sigma_s^{-1} (\vec{\mu_s} - \vec{\mu_c}) \nonumber \\
& + \ln \frac{|\Sigma_s|}{|\Sigma_c|} - n\bigg).
\end{align}
As a consequence, \eqref{eq.start4} and \eqref{eq.start3} become
\begin{equation}\label{eq.start3.kl2}
\frac{\partial}{\partial \Sigma_s} \mathcal{D}(\CoverDistc \parallel \StegDistc)  = \frac{1}{2} \big( -(\Sigma_s^{-1})^T \Sigma_c (\Sigma_s^{-1})^T + (\Sigma_s^{-1})^T \big),
\end{equation}
\begin{equation}\label{eq.start4.kl2}
\frac{\partial}{\partial \vec{\mu_s}} \mathcal{D}(\CoverDistc \parallel \StegDistc) =  \Sigma_s^{-1} (\vec{\mu_s} - \vec{\mu_c}).
\end{equation}
Then following the same steps as in the proof of Theorem \ref{TH2} using \eqref{eq.start1}, \eqref{eq.start2}, \eqref{eq.start3.kl2}, and \eqref{eq.start4.kl2},
the results in \eqref{eq.mum2}, \eqref{eq.sigmam2}, \eqref{eq.result4}, and \eqref{eq.result4m} in Theorem \ref{TH2} follow.
\end{proof}

\section{Achievability}\label{sec:ach}
In this section, we prove that the maximum embedding rate $I(\StegDist;\MsgDist)$ in~\eqref{eq.main} is achievable
with a low probability of decoding error $\ProbErrorReceiver$ at $\Receiver$.
An achievability constraint is not included in the optimization problem \eqref{eq.constrain} and \eqref{eq.lag} as $\StegDist$ and $\MsgDist$
are obtained after this problem is solved.

The standard random coding argument \cite[Ch. 7]{info.th.book} is used as the basis for the proof.
Although the minimum distance decoder (maximum likelihood estimator), is the optimal detector~\cite[Ch. 7]{info.th.book},
it is difficult to use this detector to calculate $\ProbErrorReceiver$ as we do not have the marginal distribution for each cover element,
only the joint distribution for the entire cover.
Thus, a jointly typical decoder is considered at $\Receiver$.

The approach employed in the achievability proof is the same as in the proof of the Channel Capacity Theorem \cite[Theorem 9.1.1]{info.th.book}, to obtain
$\ProbErrorReceiver$.
Although this theorem constrains the average transmission power,
in our case we do not need this constraint as $\Sigma_m$ is very small due to the low probability of detection required at $\Eavesdropper$ in \eqref{eq.constrain}.
Two types of errors can be defined if the $i$th codeword is transmitted.
\begin{itemize}
\item The transmitted and received codewords are not jointly typical. We denote this error by $\hat{E}_i$.
\item The received codeword is jointly typical with an incorrect codeword.
This error is denoted by $\sum_{j=1, j\ne i}^{\mathcal{K}} E_j$ where $\mathcal{K}$ is the number of codewords.
\end{itemize}
Then (9.26) in \cite{info.th.book} is modified as
\begin{equation}
\ProbErrorReceiver \leq P(\hat{E}_i) + \sum_{j=1, j\ne i}^{\mathcal{K}} P(E_j),
\end{equation}
where $P(x)$ is the probability of event $x$.
Continuing with the approach in \cite{info.th.book} gives
\begin{equation}\label{eq.pb}
\ProbErrorReceiver \leq 2\delta,
\end{equation}
where $\delta >0$ is an arbitrary small number.
Thus, \eqref{eq.pb} proves that when $n$ is sufficiently large and for an embedding rate $R\leq I(\StegDist;\MsgDist)-2\epsilon$,
$\ProbErrorReceiver$ is sufficiently small.
Therefore, a codebook ($\mathcal{K}$,$n$,$\epsilon$) exists that achieves an embedding rate $R\leq I(\StegDist;\MsgDist)-2\epsilon$ at $\Receiver$ with
low $\ProbErrorReceiver$.

\section{Relationship to the SRL}\label{sec:srl}
In this section, we establish the relationship between \eqref{eq.result4m} and \eqref{eq.main} in terms of the WrightOmega function and the
results for the SRL in \cite{steg.book,limits1,limits2}.
According to \cite[Theorem 1]{WrightOmega2}, a special case $\mathcal{W}_{-1} (-e^{-u -1})$ of the Lambert W function $\mathcal{W}(.)$ can be bounded as
\begin{equation}\label{eq.th1.1}
-1 -\sqrt{2u} -u < \mathcal{W}_{-1}(- e^{-u -1}) < -1 -\sqrt{2u} -\frac{2}{3}u.
\end{equation}
From \cite{WrightOmega} and \eqref{eq.lambertw.2} in Appendix C, setting $u=\frac{4\epsilon^2}{n}$ in \eqref{eq.result4m} gives
\begin{align*}
W(-u -1 -i\pi) &= \mathcal{W}_{-1}(e^{-u -1 -i\pi}), \\
 &= \mathcal{W}_{-1}(-e^{-u -1 }),
\end{align*}
and then using \eqref{eq.th1.1}
\begin{align}\label{eq.th1.2}
W(-u -1 -i\pi)  &> -1 -\sqrt{2u} - u, \nonumber \\
-W(-u -1 -i\pi) &<  1 +\sqrt{2u} + u, \nonumber \\
a             &<  1 + \sqrt{\frac{8\epsilon^2}{n}} + \frac{4\epsilon^2}{n}.
\end{align}
As $n$ is a very large number, $\sqrt{\frac{8\epsilon^2}{n}} \gg \frac{4\epsilon^2}{n}$, so
\begin{align}\label{eq.th1.3}
a                <&   \sqrt{\frac{8\epsilon^2}{n}}, \nonumber \\
\sqrt{a}         <&   \sqrt{\frac{8\epsilon^2}{n}}, \nonumber \\
\ln(\sqrt{a})    <&   \sqrt{\frac{8\epsilon^2}{n}}, \nonumber \\
n\ln(\sqrt{a})   <&  n\sqrt{\frac{8\epsilon^2}{n}}, \nonumber \\
n\ln(\sqrt{a})   <&  \sqrt{8\epsilon^2}\sqrt{n}.
\end{align}
From \eqref{eq.I} and \eqref{eq.th1.3}, we have
\begin{equation}\label{eq.Isrl}
I(\StegDistc;\MsgDistc)  < 2\sqrt{2}\epsilon\sqrt{n},
\end{equation}
which establishes the relationship between \eqref{eq.main} and the SRL.
This shows that the results given here are in agreement with those established for the SRL in \cite{steg.book} in the context of image steganography and in
\cite{limits1,limits2} in the context of covert communications.
Based on the justification in Section \ref{sec:MultimediaStoch} for modeling the Gibbs distribution with a CMQGD and \eqref{eq.Isrl},
\eqref{eq.main} can be considered as an analytic form of the SRL.\hfill $\blacksquare$

\section{Bound Interpretation}\label{sec:exp}
In this section, the experimental results for the steganographic methods in \cite{quantized.gauss}, \cite{markov2}, and \cite{markov8} using the BOSSbase 1.01 dataset~\cite{BOSS} are compared with the results given here.
Specifically, the payload (bits per pixel) of each steganographic method is determined versus $\AvgProbErrorEavesdropper$.
Then, these results are compared to the theoretical upper bounds for the payload from \eqref{eq.main}.
As the size of each image of the BOSSbase 1.01 dataset is $512\times 512$ pixels, we set $n = 512 \times 512 = 2^{18}$ in \eqref{eq.main}
to determine the maximum achievable embedding rate for independent cover elements, i.e. when each clique represents only a single pixel,
which is an extreme case for a cover object to provide the maximum achievable embedding rate.
The relationship between the achievable embedding rate and the clique size is investigated in Section~\ref{sec:discuss}.

\Cref{fig:Markov_2_T1a,fig:Markov_2_T1b,fig:Markov_2_T2a,fig:Markov_2_T2b} present the results from \cite{markov2}
and the upper bound in \eqref{eq.I} with $a$ set to the lower bound in \eqref{eq.result4.upper} to obtain an upper bound for the payload versus $\AvgProbErrorEavesdropper$.
Figures \ref{S-fig:Markov_8_T1a,S-fig:Markov_8_T1b,S-fig:Markov_8_T2a,S-fig:Markov_8_T2b} and
Figures \ref{S-fig:quantized_gauss_T1,S-fig:quantized_gauss_T2_HILL,S-fig:quantized_gauss_T2_MiPOD,S-fig:quantized_gauss_T2_SUNIWARD,S-fig:quantized_gauss_T3_HILL,S-fig:quantized_gauss_T3_MiPOD,S-fig:quantized_gauss_T3_SUNIWARD} for the methods in \cite{markov8} and \cite{quantized.gauss}, respectively,
present the average payload (in bits per pixel) for the steganographic methods S-UNIWARD, MIPOD, GMRF BASE, GMRF, MiPOD and HILL
versus $\AvgProbErrorEavesdropper$ for steganalysis using SRM and maxSRMd2.
These results show that the upper bound is small compared to the results for the steganographic methods in the literature.
The reason is that the steganographic detectors employed are not optimal for the steganographic methods.
In other words, the steganalysis methods used in \cite{quantized.gauss,markov2,markov8}
are suboptimal compared to the KL divergence $\mathcal{D}(\StegDistc \parallel \CoverDistc)$ given previously which can be regarded as
an upper bound for any steganographic detector.
Consequently, using a suboptimal detector may be misleading as it will result in a lower probability of detection error, $\AvgProbErrorEavesdropper$,
for a given embedding rate compared to the upper bound.
Further, using a suboptimal embedding technique may result in lower embedding rates than the upper bound for the same detection probability.

\begin{figure}
\centering
	\includegraphics[width=0.75\linewidth, height=0.48\linewidth]{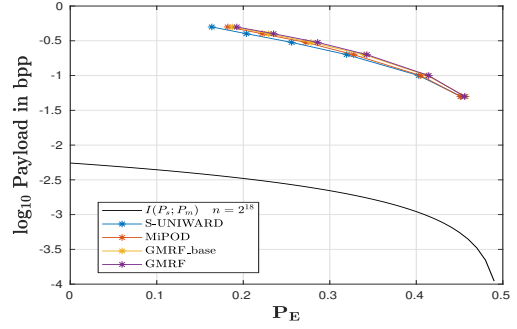}
	    \caption{Payload versus $\AvgProbErrorEavesdropper$ from \cite{markov2} for the steganographic methods S-UNIWARD, MIPOD, GMRF\_BASE and GMRF with steganalysis using SRM.}\label{fig:Markov_2_T1a}
\end{figure}

\begin{figure}\centering
	\includegraphics[width=0.75\linewidth, height=0.48\linewidth]{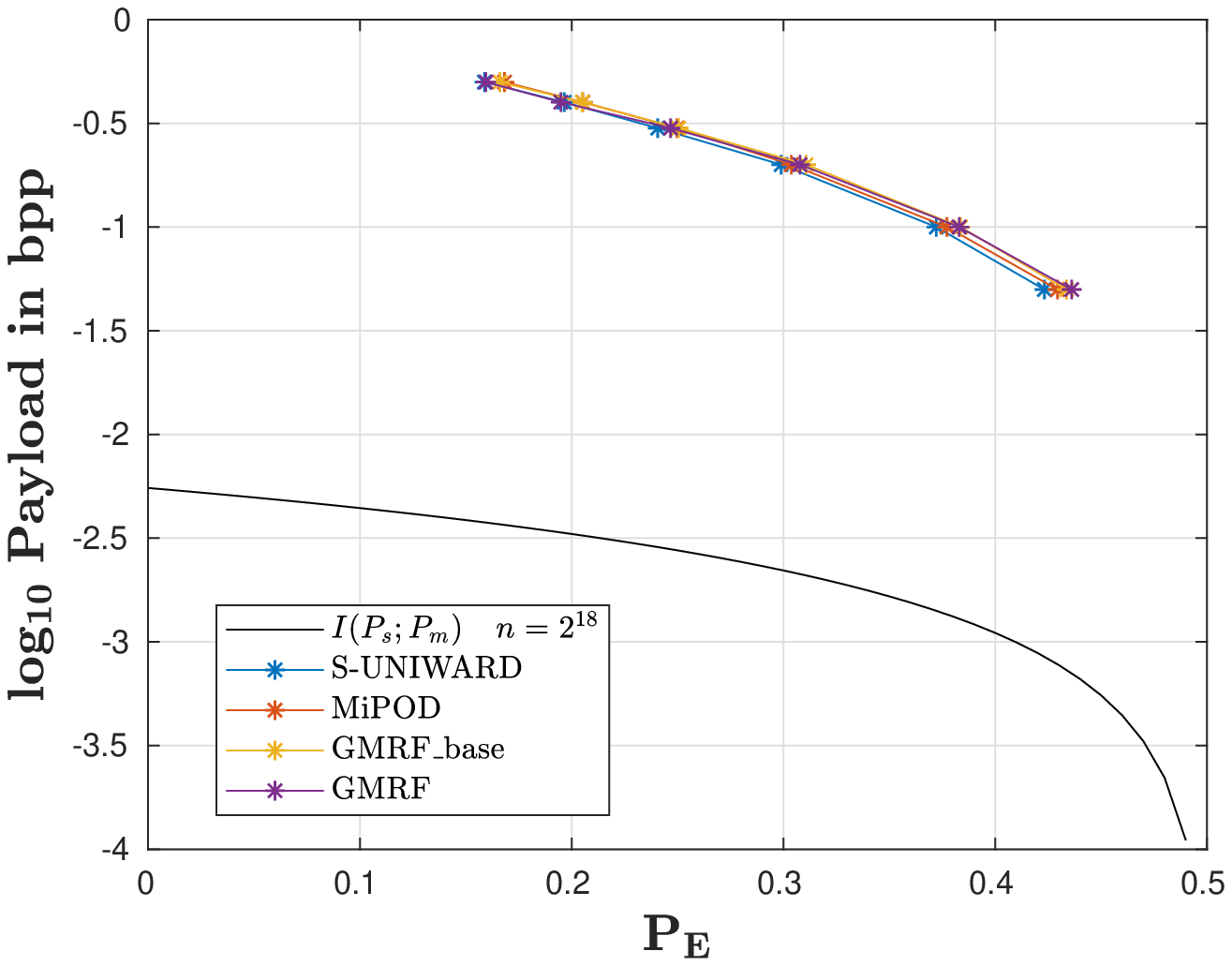}
	    \caption{Payload versus $\AvgProbErrorEavesdropper$ from \cite{markov2} for the steganographic methods S-UNIWARD, MIPOD, GMRF\_BASE and GMRF with steganalysis using maxSRMd2.}\label{fig:Markov_2_T1b}
\end{figure}

\begin{figure}\centering
	\includegraphics[width=0.75\linewidth, height=0.48\linewidth]{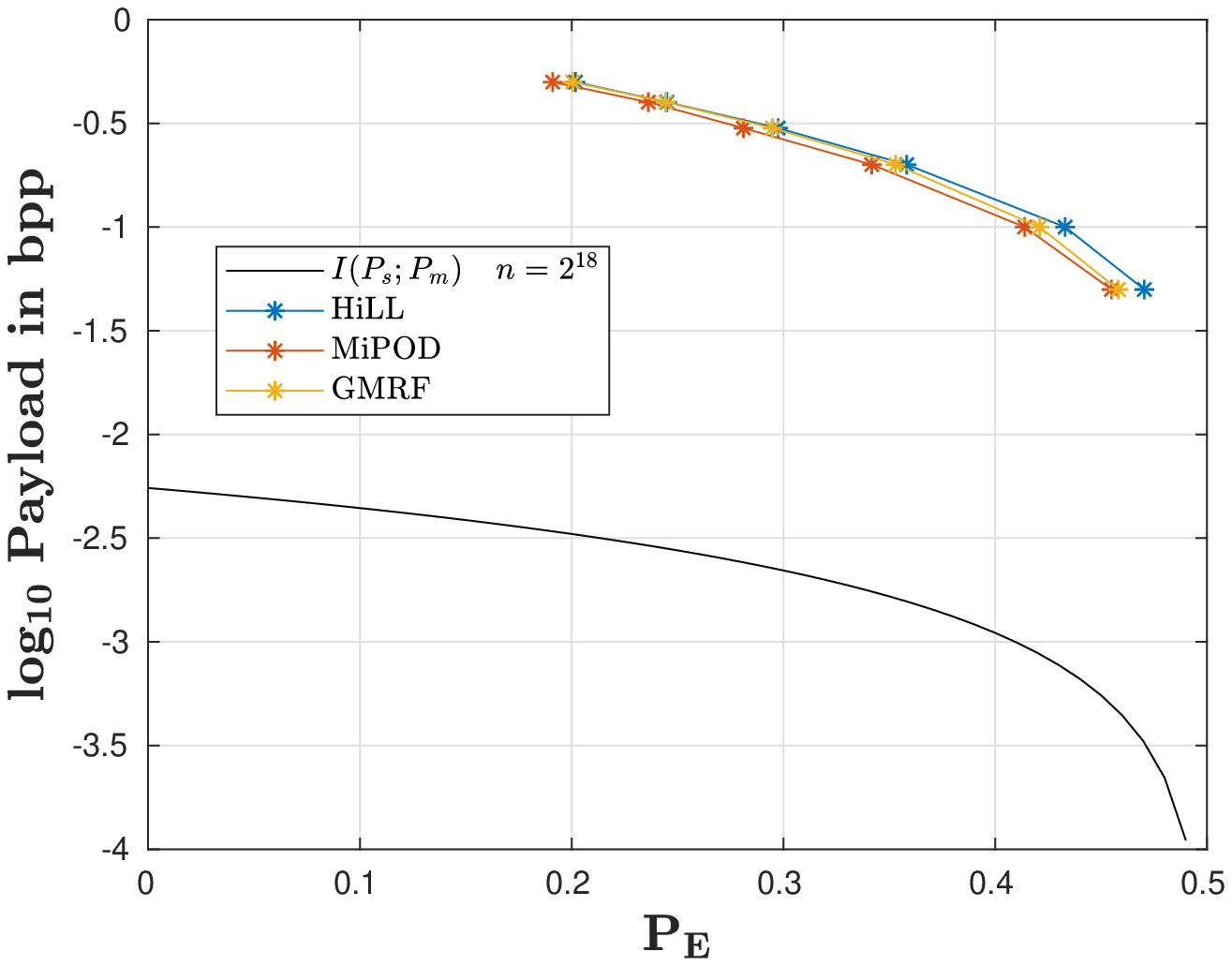}
	    \caption{Payload versus $\AvgProbErrorEavesdropper$ from \cite{markov2} for the steganographic methods MiPOD and GMRF (both enhanced using low-pass filtered cost) and HILL with steganalysis using SRM.}\label{fig:Markov_2_T2a}
\end{figure}

\begin{figure}\centering
	\includegraphics[width=0.75\linewidth, height=0.48\linewidth]{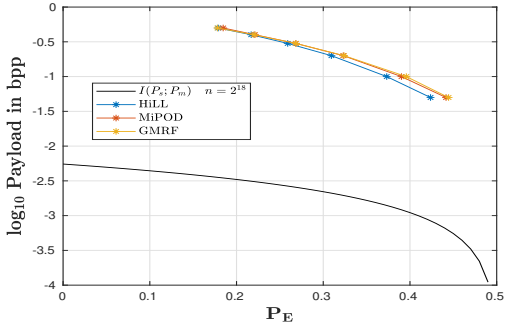}
	    \caption{Payload versus $\AvgProbErrorEavesdropper$ from \cite{markov2} for the steganographic methods MiPOD and GMRF (both enhanced using low-pass filtered cost) and HILL with steganalysis using maxSRMd2.}\label{fig:Markov_2_T2b}
\end{figure}

\section{Discussion}\label{sec:discuss}
From \eqref{eq.Isrl}, $I(\StegDist;\MsgDist)\propto\sqrt{n}$ which implies that
the upper bound for the embedding rate occurs when each clique contains a single element, i.e. when $n$ is large.
This occurs in cases such as a highly textured image or a motion field of a sandstorm video as in Fig. \ref{fig:ex.high}.
Conversely, the lowest embedding rate occurs when all elements of the cover are correlated, i.e. the entire cover can be considered as a single clique so that $n=1$.
This can occur in cases such as an image of a clear sky or a motion field of a stationary video scene with global camera motion as in Fig. \ref{fig:ex.low},
both of which imply a small number of cliques.
In other words, the largest value occurs when all elements of the cover are independent,
whereas the smallest value occurs when the cover contains only a single clique.

It should be noted that the proof of the information-theoretic upper bound for $I(\StegDistc;\MsgDistc)$ in \eqref{eq.main} is not constructive
due to the random coding argument.
In other words, it does not provide a means of designing a steganographic method or codebook to achieve the upper bound.
Only an analytic existence proof was provided here so further work is needed to construct steganographic methods close to the bound.
However, the following can be inferred from Theorem~\ref{TH2} for steganography and steganalysis methods.
\begin{itemize}
\item The achievable embedding rate is monotonically increasing with the variance of the cover elements (or the entropy of the cover),
as shown in Figures \ref{fig:ex.high} and \ref{fig:ex.low}.
\item The achievable embedding rate with respect to a given level of risk, i.e. best steganalysis performance,
is monotonically increasing with the number of cover elements.
\item The message elements should be mapped to a codebook with a distribution similar to that of the cover with zero mean.
\end{itemize}
\begin{figure}
\centering
	\begin{subfigure}{0.8\columnwidth}\centering
	\includegraphics[width=0.8\linewidth, height=0.4\linewidth]{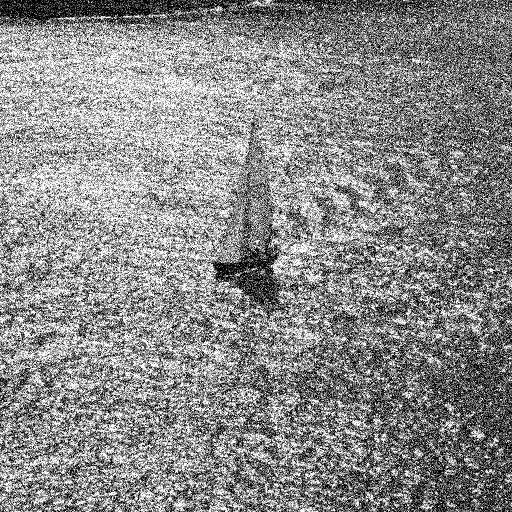}
	    \caption{A highly textured image (6184.pgm) from \cite{BOWS} with added zero mean Gaussian noise (variance $0.05$).}
	\end{subfigure} \\
	\begin{subfigure}{\columnwidth}\centering
	\includegraphics[width=0.85\linewidth, height=0.4\linewidth]{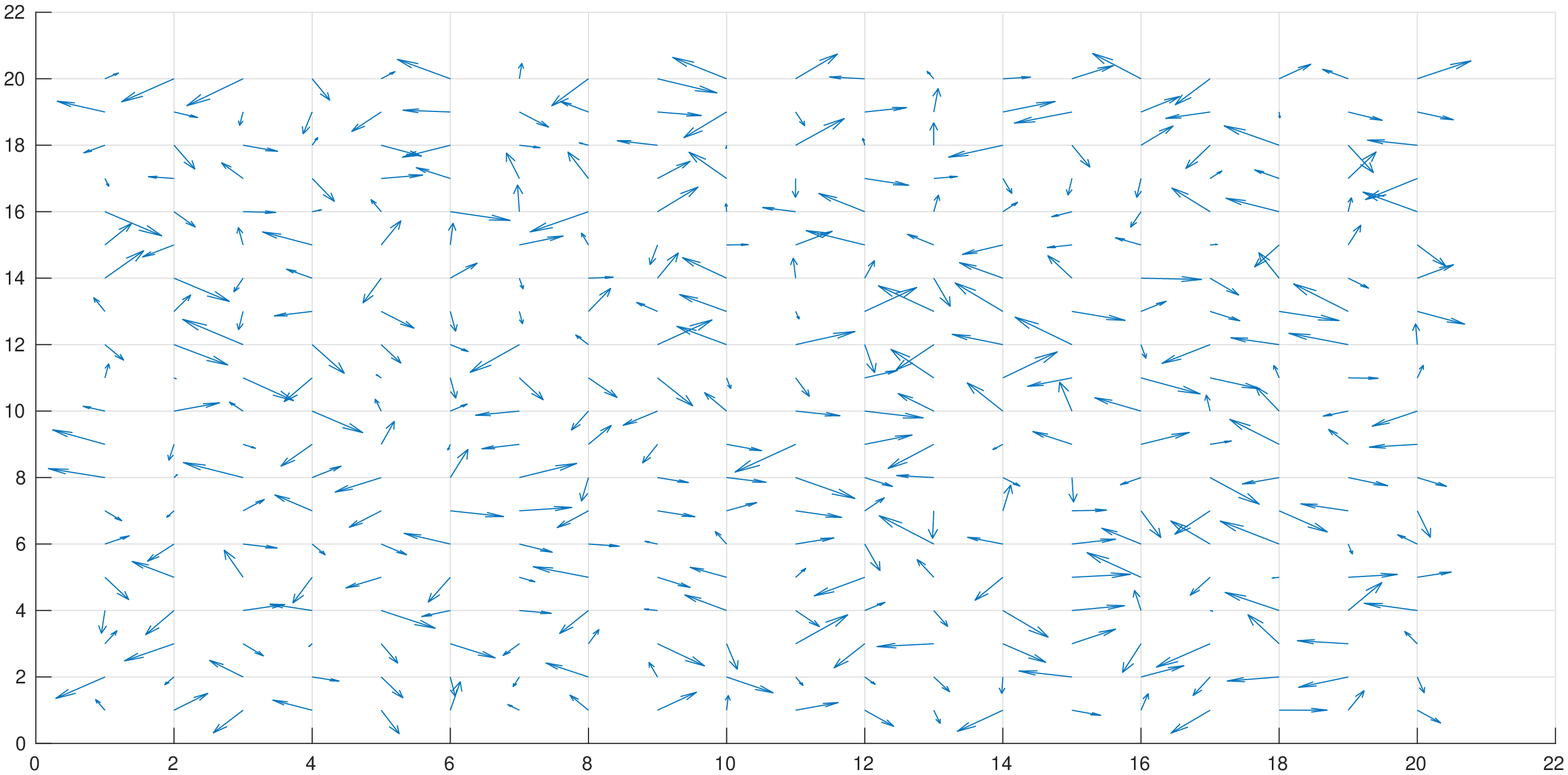}
	    \caption{A non-correlated motion field.}
	    \end{subfigure}
\caption{Examples of covers allowing high embedding rates.}
\label{fig:ex.high}
\end{figure}

\begin{figure}
\centering
	\begin{subfigure}{0.8\columnwidth}\centering
	\includegraphics[width=0.8\linewidth, height=0.4\linewidth]{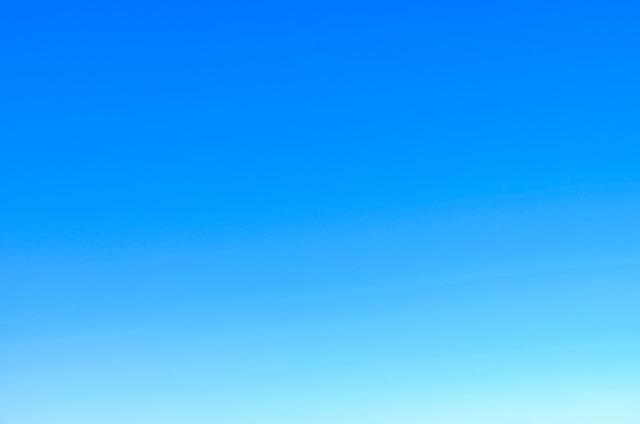}
	    \caption{A smooth image.}
	\end{subfigure} \\
	\begin{subfigure}{\columnwidth}\centering
	\includegraphics[width=0.8\linewidth, height=0.4\linewidth]{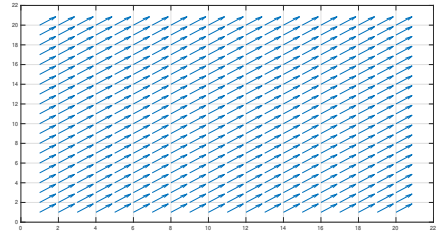}
	    \caption{A highly correlated motion field.}
	    \end{subfigure}
\caption{Examples of covers allowing low embedding rates.}
\label{fig:ex.low}
\end{figure}

\section{Conclusion}\label{sec:conc}
An information-theoretic upper bound was presented for data embedding with multimedia covers for a given level of security.
The contributions of this paper are as follows.
First, a justification was given for using a CMQGD to approximate the intractable Gibbs distribution.
Second, an analytic form for the SRL was derived.
Third, an analytic form was given not only for the case when the cover is known by both the sender $\Sender$ and receiver $\Receiver$,
but also for the practical case when $\Receiver$ only knows the cover distribution.
This was obtained using the achievability proof in Section \ref{sec:ach}.
Fourth, the bound was used with several types of steganographic detectors and multimedia.
Fifth, model parameters were given for the optimal message codebook in an analytic form.
Note that the results presented for cover objects with a continuous distribution are exact closed-form solutions.

\section*{Appendix A\\ Extension of Theorem 2 in \cite{gaussisgauss} for Non-zero Mean Gaussian Distributions}\label{sec.sub.prof}
The proof of Theorem 2 in \cite{gaussisgauss} for zero mean Gaussian distributions can be extended to the case of non-zero mean Gaussian distributions,
i.e. $p_0(y) \sim \mathcal{N}(\mu,\sigma_w)$.
Following the approach in \cite{gaussisgauss} except that $p_0(y)=\frac{1}{\sigma_w\sqrt{2\pi}} e^{\frac{-1}{2\sigma_w^2}(y-\mu)^2}$, we have
\begin{align}
\int_{-\infty}^{\infty} p_1(y)~dy &= \frac{e^{-\rho_0-1}}{\sigma_w\sqrt{2\pi}} \int_{-\infty}^{\infty} e^{\frac{-1}{2\sigma_w^2}(y-\mu)^2} e^{-\rho_1y^2} \quad dy, \nonumber \\
&= \frac{e^{-\rho_0-1-\frac{\mu}{2\sigma_w^2}}}{\sigma_w\sqrt{2\pi}} \int_{-\infty}^{\infty} e^{-y^2(\rho_1 + \frac{1}{2\sigma_w^2})} e^{\frac{\mu}{\sigma_w^2}y} \quad dy. \nonumber
\end{align}
Then using \cite[(3.323.2)]{mathbook.7.3.4} gives
\begin{equation*}
\int_{-\infty}^{\infty} e^{-a^2x^2\pm bx} dx = \frac{\sqrt{\pi}}{a}e^{\frac{b^2}{4a^2}},
\end{equation*}
so
\begin{align}\label{new.23}
\int_{-\infty}^{\infty} p_1(y)~dy &= \frac{1}{\sqrt{2\rho_1\sigma_w^2 +1}} e^{\rho_0 -1 -\frac{\mu}{2\sigma_w^2} + \frac{\mu^2}{4\rho\sigma^4_w+2\sigma_w^2} }. \\
\end{align}
Equation \eqref{new.23} is an extension of (23) in \cite{gaussisgauss}.
Similarly, for (26) in \cite{gaussisgauss} we have
\begin{align}
\int_{-\infty}^{\infty} y^2p_1(y)~dy &= \frac{e^{-\rho_0-1}}{\sigma_w\sqrt{2\pi}} \int_{-\infty}^{\infty} y^2 e^{\frac{-1}{2\sigma_w^2}(y-\mu)^2} e^{-\rho_1y^2} \quad dy, \nonumber \\
&= \frac{e^{-\rho_0-1-\frac{\mu}{2\sigma_w^2}}}{\sigma_w\sqrt{2\pi}} \int_{-\infty}^{\infty} y^2  e^{-y^2(\rho_1 + \frac{1}{2\sigma_w^2})}\nonumber \\
 & e^{\frac{\mu}{\sigma_w^2}y} \quad dy. \nonumber
\end{align}
Then from \cite[(3.462.8)]{mathbook.7.3.4}
\begin{equation*}
\int_{-\infty}^{\infty} x^2 e^{-ax^2} e^{2bx} dx = \frac{1}{2a} \sqrt{\frac{\pi}{a}} \left(1+ 2\frac{b^2}{a}\right) e^{\frac{b^2}{a}},
\end{equation*}
which gives
\begin{align}\label{new.26}
\int_{-\infty}^{\infty} y^2p_1(y)~dy &= \frac{e^{-\rho_0 -1 - \frac{\mu}{2\sigma_w^2}}\sigma_w^2}{(1+2\rho_1\sigma_w^2)^\frac{3}{2}} \left(1+ \frac{\mu}{4\sigma_w^4(\rho_1+\frac{1}{2\sigma_w^2})}\right) \nonumber \\
& e^{\frac{\mu^2}{4\sigma_w^4(\rho_1+\frac{1}{2\sigma_w^2})}}.
\end{align}
Substituting (17-b) from \cite{gaussisgauss} in \eqref{new.23} we have
\begin{equation}\label{new.25}
e^{\rho_0 -1} = \frac{1}{\sqrt{2\rho_1\sigma_w^2 +1}} e^{-\frac{\mu}{2\sigma_w^2} + \frac{\mu^2}{4\rho\sigma^4_w+2\sigma_w^2} },
\end{equation}
which is a generalization of (25) at \cite{gaussisgauss}.
In addition, substituting (17-c) from \cite{gaussisgauss} in \eqref{new.23} and \eqref{new.26} results in
\begin{equation*}
P_y=\frac{2\sigma^2_w (1+2\rho_1\sigma_w^2) +\mu}{2(2\rho_1\sigma_w^2+1)^2},
\end{equation*}
and solving for $\rho_1$ gives
\begin{equation}\label{new.28}
\rho_1 = \frac{\sigma_w^2 - 2P_y+\sqrt{\sigma_w^4+2\mu P_y}}{4\sigma_w^2 P_y},
\end{equation}
which generalizes (28) in \cite{gaussisgauss}.
Now, substituting \eqref{new.25} and \eqref{new.28} in (22) in \cite{gaussisgauss}, we have
\begin{equation}\label{new.29}
p_1(y) = \frac{1}{\sqrt{2\pi P_y}} e^{\frac{-1}{2P_y^2}(y-\mu)^2}.
\end{equation}
which generalizes (29) in \cite{gaussisgauss}.
Using this expression in the remainder of the proof of Theorem 2 in \cite{gaussisgauss} proves the generalization for non-zero mean Gaussian distributions,
and this can be extended to multivariate Gaussian distributions such as CMQGD. \hfill $\blacksquare$

\section*{Appendix B\\ Entropy of a Multivariate Quantized Gaussian Distribution}\label{lemma.1}

\begin{definition}\label{lem.A1}
The entropy of $\CoverDist$ and $\StegDist$.
\end{definition}
Theorem 8.3.1 in \cite{info.th.book} defines the entropy of a quantized distribution as
\begin{equation}\label{eq.ent.t.cover}
H(P) \approx h(p) + b,
\end{equation}
where $p$ is a continuous distribution, $P$ is a quantized version of $p$,
$b$ is the number of quantization bits, $H(P)$ is the entropy of the quantized distribution and $h(p)$ is the differential entropy of the continuous distribution.
The entropy of CMQGD $p_g$ is defined in \cite{limits2} as
\begin{equation}\label{eq.ent.mqgd}
h(p_g) = \frac{1}{2}\ln (2\pi e |\Sigma_g|),
\end{equation}
so the entropy of $\CoverDist$ and $\StegDist$ can be approximated w.r.t. the corresponding continuous versions of $\CoverDist$ and $\StegDist$ as
\begin{equation}\label{eq.ent.pc}
H(\CoverDist) \approx \frac{1}{2}\ln (2\pi e |\Sigma_c|) + b,
\end{equation}
\begin{equation}\label{eq.ent.ps}
H(\StegDist) \approx \frac{1}{2}\ln (2\pi e |\Sigma_s|) + b.
\end{equation}

\section*{Appendix C\\ Additional Properties of the Embedding Factor $a$}

\begin{cor}\label{lem.A*}
$a$ cannot be a complex number.
\end{cor}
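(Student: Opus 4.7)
The plan is to convert the WrightOmega expression in \eqref{eq.result4m} into a Lambert $\mathcal{W}$ value with a real argument, at which point real-valuedness becomes a classical fact. Setting $u = 4\epsilon^2/n > 0$, the identity in \eqref{eq.lambertw.2} (established earlier in this Appendix) yields
\[
W(-u - 1 - i\pi) = \mathcal{W}_{-1}\left(e^{-u-1-i\pi}\right) = \mathcal{W}_{-1}\left(-e^{-u-1}\right),
\]
exactly as already used in Section~\ref{sec:srl}. Since $-e^{-u-1}$ is a real number, the question reduces to showing that $\mathcal{W}_{-1}$ returns a real value at this point.

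The next step is to verify $-e^{-u-1} \in (-1/e, 0)$, which holds because $u > 0$ implies $0 < e^{-u-1} < 1/e$. On this interval, the $-1$ branch of the Lambert $\mathcal{W}$ function is a classical real-analytic, strictly decreasing bijection onto $(-\infty, -1)$. Hence $\mathcal{W}_{-1}(-e^{-u-1})$ is a real number strictly less than $-1$, so
\[
a = -W(-u - 1 - i\pi) = -\mathcal{W}_{-1}(-e^{-u-1}) > 1
\]
is real. As a byproduct, $a > 0$ also confirms that $\Sigma_s = a\,\Sigma_c$ in \eqref{eq.result2.a} is a valid covariance matrix, and $a \geq 1$ is consistent with an embedding factor that enlarges variance to accommodate hidden information.

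The main obstacle is not analytical depth but precise branch bookkeeping for the WrightOmega function. The point $-u - 1 - i\pi$ lies exactly on the branch cut of $\omega$, so the correct identification as $\mathcal{W}_{-1}(\cdot)$ rather than $\mathcal{W}_0(\cdot)$ depends on the limit convention fixed by \eqref{eq.lambertw.2}; choosing the wrong side of the cut would give a different, spurious value. Once that identification is in place, the corollary follows in a single line from the elementary real-valuedness of $\mathcal{W}_{-1}$ on $(-1/e, 0)$, and no further analytic machinery is required.
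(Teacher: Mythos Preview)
Your argument is correct and follows essentially the same route as the paper: convert the WrightOmega value to the $\mathcal{W}_{-1}$ branch of the Lambert $W$ function at the real argument $-e^{-u-1}$, then invoke the real-valuedness of $\mathcal{W}_{-1}$ there. Your version is slightly more careful (you explicitly check $-e^{-u-1}\in(-1/e,0)$ and note the branch-cut subtlety), and you pick up $a>1$ as a byproduct, which the paper proves separately in the next corollary; but be aware that \eqref{eq.lambertw.2} is actually derived \emph{within} this corollary in the paper rather than ``earlier in this Appendix,'' so you should either rederive it here from \eqref{eq.lambertw} or adjust the cross-reference.
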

Note that $a \in\mathbb{R}$ although $W(.) \in \mathbb{C}$.
From \cite{WrightOmega}, the relationship between the Lambert W Function $\mathcal{W}(.)$ and the WrightOmega function $W(.)$ can be expressed as
\begin{equation}\label{eq.lambertw}
W(x) = \mathcal{W}_{\big \lceil \frac{\mathrm{Im}(x) - \pi}{2 \pi} \big \rceil}(e^{x}).
\end{equation}
If $x = -\frac{4\epsilon^2}{n} -1 -i\pi$ as in \eqref{eq.result4m}, then $\mathrm{Im}(x) = \pi$ and therefore \eqref{eq.lambertw} can be rewritten as
\begin{equation}\label{eq.lambertw.2}
W(x) = \mathcal{W}_{-1}(e^{x}),
\end{equation}
which is a special case of the Lambert W Function where $\mathcal{W}_{-1}(.) \in \mathbb{R}$ \cite{LambertWfunction}.
Thus, $a* \in \mathbb{R}$, and this also applies to $a$ in \eqref{eq.result4}.

\begin{cor}\label{lemma.a.geq.1}
$a$ cannot be less than 1.
\end{cor}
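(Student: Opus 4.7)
The plan is to combine two complementary arguments: a covariance-positivity argument, and an optimality (branch-selection) argument on the transcendental equation \eqref{eq.result3}.

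First, I would invoke \eqref{eq.sigmam2}, which forces $\Sigma_m=(a-1)\Sigma_c$. Since $\Sigma_m$ must be a valid covariance matrix for the coded message $\CodedMsg$, it has to be positive semi-definite. Because $\Sigma_c$ is positive semi-definite (and generically positive definite for a non-degenerate cover), the scalar factor $(a-1)$ must be nonnegative; otherwise every eigenvalue of $\Sigma_m$ would be non-positive, contradicting the existence of a random message with covariance $\Sigma_m$. This immediately gives $a\ge 1$, with equality corresponding to the trivial no-embedding case $\Msg\equiv 0$.

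Next, to rule out the spurious root $a\in(0,1)$ on purely optimization-theoretic grounds, I would analyze the constraint equation \eqref{eq.result3}, namely $g(a):=a-\ln(a)=\tfrac{2}{n}\mathcal{D}(\StegDistc\parallel\CoverDistc)+1$. A quick calculus check gives $g'(a)=1-1/a$, so $g$ is strictly decreasing on $(0,1)$, strictly increasing on $(1,\infty)$, with minimum $g(1)=1$. Hence for any feasible right-hand side ($\ge 1$ since KL divergence is nonnegative) there are at most two roots: one $a_-\in(0,1]$ and one $a_+\in[1,\infty)$. Now recall from \eqref{eq.result2.a}--\eqref{eq.result2.b} and the monotonicity argument following \eqref{proof.D.Sigma} that $|\Sigma_s|=a^{n}|\Sigma_c|$ and that $H(\StegDist)$ is monotonically increasing in $|\Sigma_s|$. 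Since the optimization in \eqref{eq.constrain} maximizes $H(\StegDist)$, the optimizer must select the larger root, i.e.\ $a_+ \ge 1$.

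Finally, I would tie this back to the WrightOmega expression \eqref{eq.result4m}. From Corollary~\ref{lem.A*} and \eqref{eq.lambertw.2}, $a=-\mathcal{W}_{-1}(-e^{-u-1})$ with $u=4\epsilon^2/n\ge 0$. The principal property of the $\mathcal{W}_{-1}$ branch on $[-e^{-1},0)$ is that $\mathcal{W}_{-1}(x)\le -1$, so $-\mathcal{W}_{-1}(-e^{-u-1})\ge 1$, confirming that the chosen branch is exactly the root $a_+\ge 1$ identified above. The main (minor) obstacle is making the branch selection rigorous: one has to note that $\mathcal{W}_{0}$ would instead yield $a_-\in(0,1]$ and produce a negative-semi-definite $\Sigma_m$, which is precisely why the WrightOmega convention adopted in \eqref{eq.result4} picks the lower branch. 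With that observation, $a\ge 1$ follows from any of the three viewpoints, completing the proof.
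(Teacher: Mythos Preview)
Your first paragraph is exactly the paper's own proof: the paper simply observes that $\Sigma_s$, $\Sigma_c$, and $\Sigma_m$ are positive semi-definite, so from $\Sigma_m=(a-1)\Sigma_c$ in \eqref{eq.sigmam2} one must have $a\ge 1$, with equality when no embedding occurs. Your second and third paragraphs go beyond the paper: the two-root analysis of $g(a)=a-\ln a$ together with the monotonicity of $H(\StegDist)$ in $|\Sigma_s|$, and the direct use of $\mathcal{W}_{-1}(x)\le -1$ on $[-e^{-1},0)$, are both correct and independently establish $a\ge 1$. These additions are valuable because they justify the branch selection made in \eqref{eq.result4}--\eqref{eq.result4m} (showing that the $\mathcal{W}_{-1}$ branch is precisely the entropy-maximizing root), a point the paper leaves implicit; but for the corollary as stated, the covariance-positivity argument alone suffices, and that is all the paper uses.
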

As $\Sigma_s, \Sigma_c$ and $\Sigma_m$ are positive semi-definite matrices,
from \eqref{eq.sigmam2} $a$ must be greater than 1 with equality when no embedding occurs from \eqref{eq.result2.a} and \eqref{eq.sigmam2}.
Thus
\begin{equation}\label{eq.agt1}
a \geq 1.
\end{equation}

\begin{cor}\label{lamma.a.pe}
The relation between $\AvgProbErrorEavesdropper$ and $a$.
\end{cor}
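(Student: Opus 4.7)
The plan is to chain the inequalities already proven earlier in the paper so as to express a lower bound on $\AvgProbErrorEavesdropper$ as an explicit function of the embedding factor $a$. Starting from \eqref{eq.pe_pd}, $\AvgProbErrorEavesdropper = (1 - \mathcal{V}(\StegDist,\CoverDist))/2$, so the question reduces to upper-bounding the total variation distance in terms of $a$. The Pinsker-type bound in \eqref{eq.klv}, $\mathcal{V}(\StegDist,\CoverDist) \leq \sqrt{\mathcal{D}(\StegDist \parallel \CoverDist)/2}$, further reduces the task to writing the KL divergence in terms of $a$.

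The second step invokes the internal identity \eqref{eq.result3} established in the proof of Theorem \ref{TH2}. Under the optimal choice $\Sigma_s = a\Sigma_c$ and $\vec{\mu_s} = \vec{\mu_c}$, that identity rearranges to $\mathcal{D}(\StegDistc \parallel \CoverDistc) = \tfrac{n}{2}(a - \ln a - 1)$, and Lemma \ref{TH1} lifts this to the quantized setting as an upper bound on $\mathcal{D}(\StegDist \parallel \CoverDist)$. The quantity is well-defined and non-negative by Corollary \ref{lemma.a.geq.1}, which gives $a \geq 1$, and vanishes exactly at $a = 1$ (no embedding), matching the intuition that identical cover and stego distributions yield zero divergence.

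Combining the two steps yields the target relation
\begin{equation*}
\AvgProbErrorEavesdropper \geq \frac{1}{2} - \frac{1}{4}\sqrt{n(a - \ln a - 1)},
\end{equation*}
or equivalently, by substituting the constraint $\mathcal{D}(\StegDistc \parallel \CoverDistc)\leq 2\epsilon^2$ with the explicit $a$ of \eqref{eq.result4m}, $\AvgProbErrorEavesdropper \geq (1 - \epsilon)/2$. Since $a - \ln a - 1$ is strictly increasing on $[1,\infty)$, the bound is monotonically decreasing in $a$, capturing the trade-off that a larger embedding factor (equivalently, a larger ratio of stego-to-cover covariance) admits a smaller guaranteed detection-error probability at $\Eavesdropper$.

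The main difficulty is essentially bookkeeping: one must verify that the KL divergence appearing in the Pinsker step is the same quantity optimized in Theorem \ref{TH2}, so that the substitution from \eqref{eq.result3} is legitimate, and that the direction of the inequality is preserved through the square root and the subtraction from $\tfrac{1}{2}$. No additional technical machinery beyond what the excerpt already establishes is needed; the result is a direct algebraic consequence of \eqref{eq.pe_pd}, \eqref{eq.klv}, \eqref{eq.result3}, and Corollary \ref{lemma.a.geq.1}.
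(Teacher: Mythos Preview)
Your chain of inequalities is correct and uses exactly the same ingredients as the paper: the identity $\AvgProbErrorEavesdropper=(1-\mathcal{V})/2$ from \eqref{eq.pe_pd}--\eqref{eq.pe}, Pinsker's bound \eqref{eq.klv}, and the relation $\mathcal{D}(\StegDistc\parallel\CoverDistc)=\tfrac{n}{2}(a-\ln a-1)$ obtained from \eqref{eq.result3}. The difference is only in the final presentation. You stop at the forward form
\[
\AvgProbErrorEavesdropper \;\geq\; \tfrac{1}{2}-\tfrac{1}{4}\sqrt{n(a-\ln a-1)},
\]
whereas the paper inverts this relation through the WrightOmega function (using the monotonicity of $-W(-\gamma-1-i\pi)$ in $\gamma$ together with \eqref{eq.result4}) to state a two-sided sandwich on the embedding factor itself,
\[
-W\!\left(-\tfrac{4(1-2\AvgProbErrorEavesdropper)^2}{n}-1-i\pi\right)\;\le\; a \;\le\; -W\!\left(-\tfrac{4\epsilon^2}{n}-1-i\pi\right).
\]
Your version has the advantage of staying with elementary functions and making the monotone trade-off in $a$ immediately visible; the paper's version has the advantage of placing $a$ explicitly between a detector-performance lower bound and the design-constraint upper bound \eqref{eq.result4m}, which is the form used later (e.g.\ in Section~\ref{sec:exp}). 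The two are algebraically equivalent, so nothing is missing from your argument.
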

From \eqref{eq.pe_pd}-\eqref{eq.klv} and \eqref{eq.c1}, we have
\begin{equation}\label{eq.dpe}
2\,(1-2\AvgProbErrorEavesdropper)^2 \leq \mathcal{D}(\StegDist \parallel \CoverDist) \leq 2\epsilon^2,
\end{equation}
Then, as the WrightOmega function in the form $-W(-\gamma-1-i\pi)$ is monotonically increasing with $\gamma$ \cite{WrightOmega}, where $\gamma$ is a positive constant,
from  \eqref{eq.result4}, \eqref{eq.result4m} and \eqref{eq.dpe} we have that
\begin{equation}\label{eq.result4.upper}
-W(-\frac{4\,(1-2\AvgProbErrorEavesdropper)^2}{n} -1 -i\pi) \leq a \leq -W(-\frac{4\epsilon^2}{n} -1 -i\pi).
\end{equation}

\bibliographystyle{IEEEtran}
\bibliography{IEEEabrv,main}

\clearpage
\twocolumn[\section*{\Large Supplementary Document for: ``Information-Theoretic Bounds for Steganography in Multimedia"}]

\begin{figure}[ht]\centering
	\includegraphics[width=0.8\linewidth]{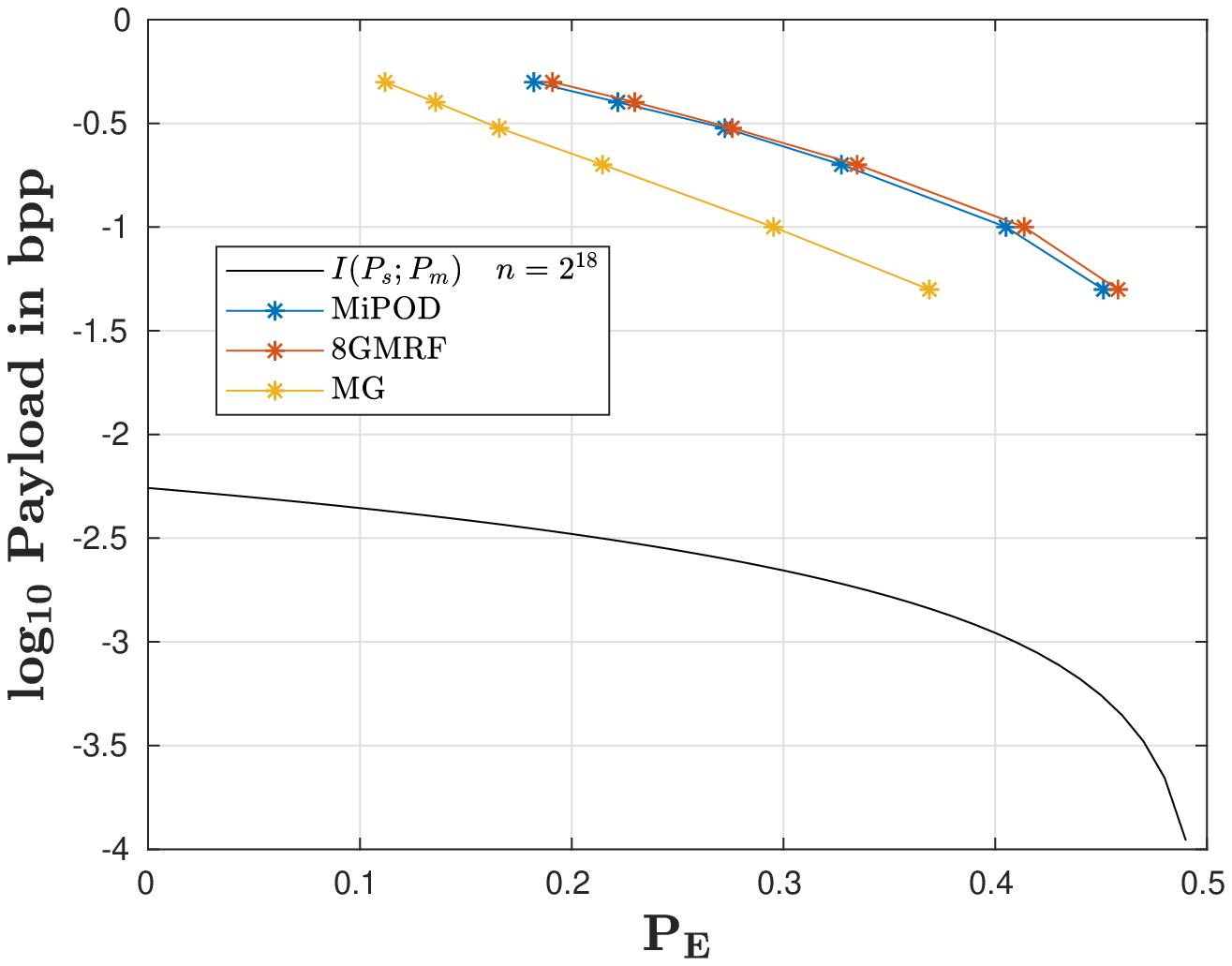}
	    \caption{Results from \cite{markov8} for the MiPOD, 8-GMRF, and MG steganographic methods with steganalysis using SRM.}\label{fig:Markov_8_T1a}
\end{figure}
\begin{figure}\centering
	\includegraphics[width=0.8\linewidth]{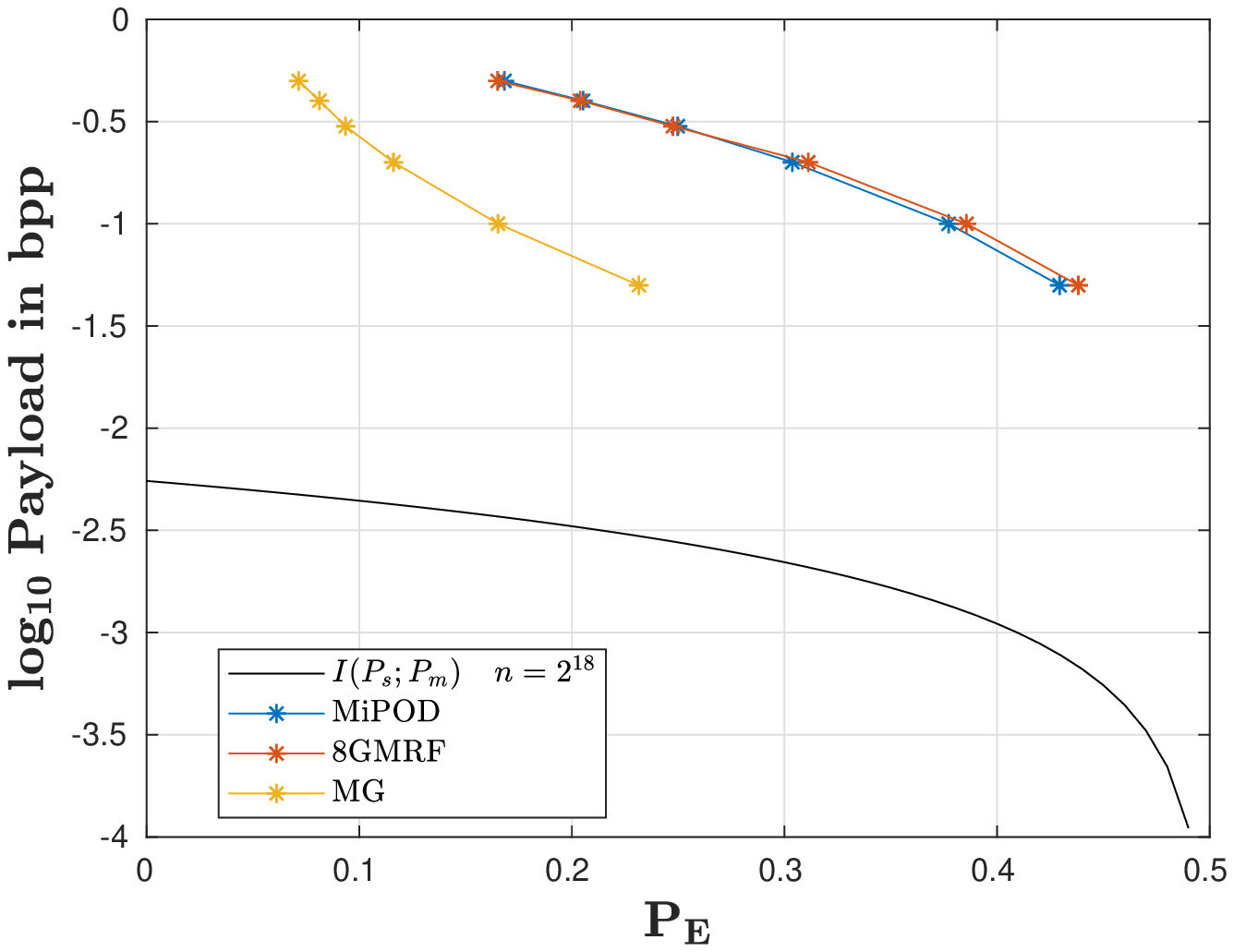}
	    \caption{Results from \cite{markov8} for the MiPOD, 8-GMRF, and MG steganographic methods with steganalysis using maxSRMd2.}\label{fig:Markov_8_T1b}
\end{figure}
\begin{figure}\centering
	\includegraphics[width=0.8\linewidth]{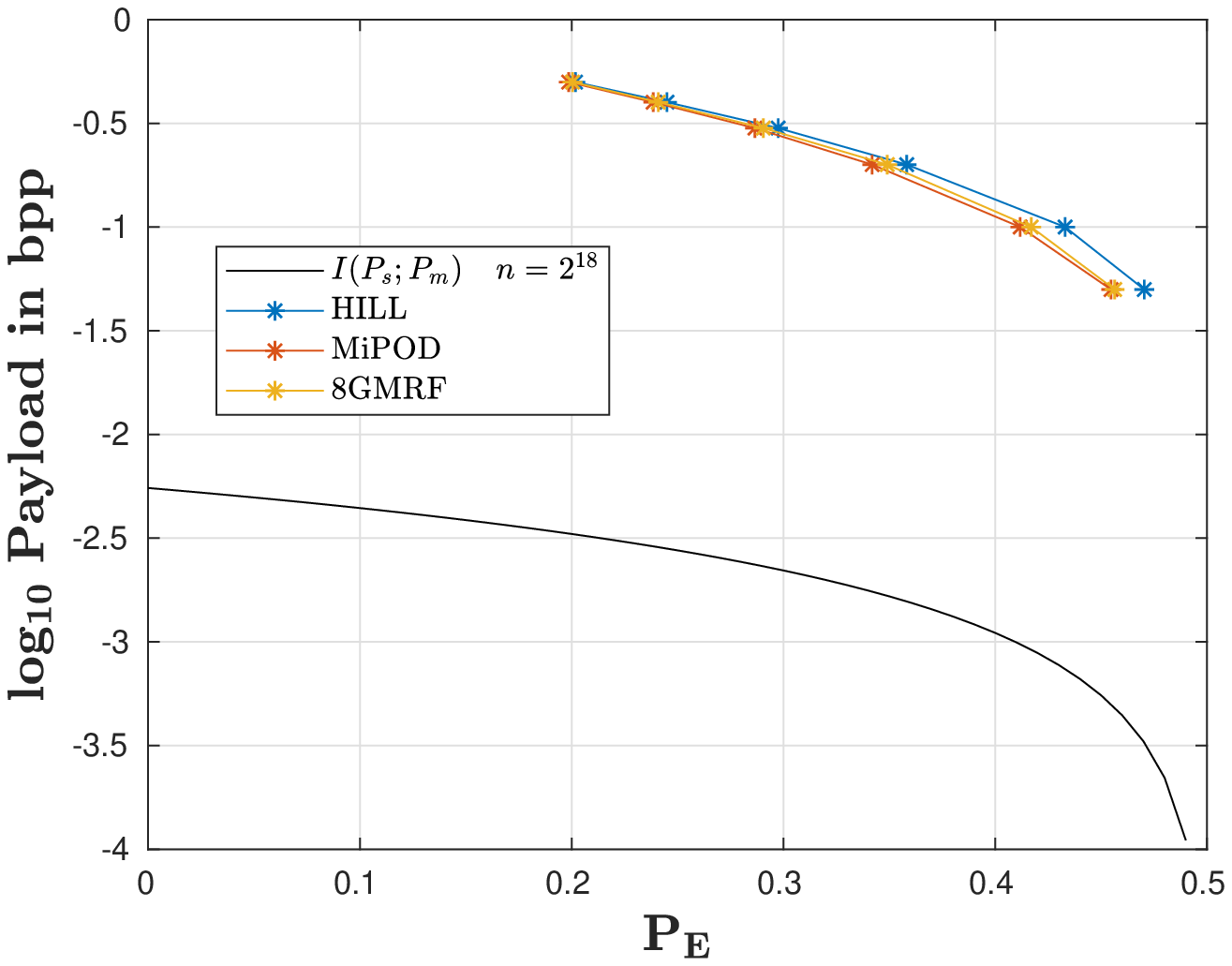}
	    \caption{Results from \cite{markov8} for the MiPOD and 8-GMRF (both enhanced using low-pass filtered cost) and HILL steganographic methods and the ensemble 1.0 classifier with steganalysis using SRM.}\label{fig:Markov_8_T2a}
\end{figure}
\begin{figure}\centering
	\includegraphics[width=0.8\linewidth]{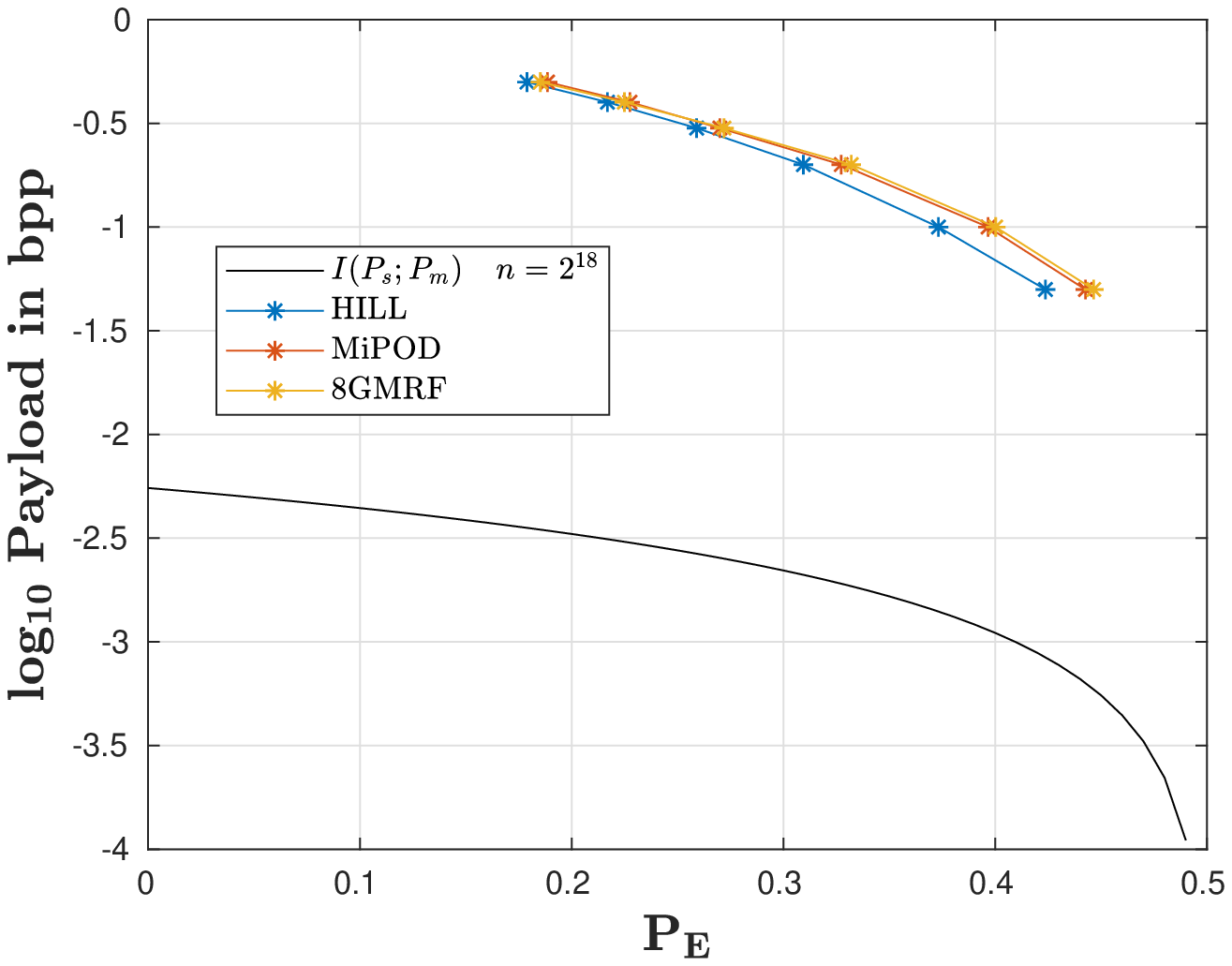}
	    \caption{Results from \cite{markov8} for the MiPOD and 8-GMRF (both enhanced using low-pass filtered cost) and HILL steganographic methods and the ensemble 1.0 classifier with steganalysis using maxSRMd2.}\label{fig:Markov_8_T2b}
\end{figure}
\begin{figure}
  \centerline{\includegraphics[width=0.8\linewidth]{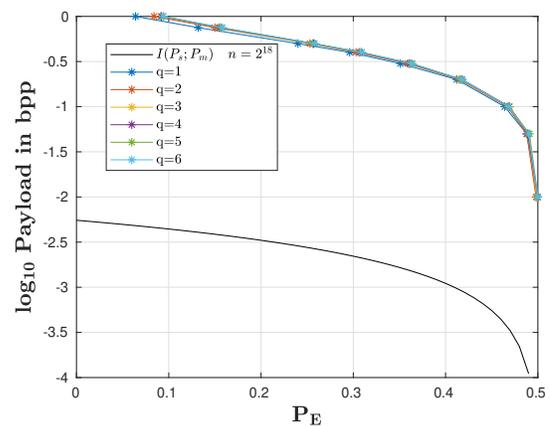}}
  \caption{Results for the Gaussian version of the HILL algorithm in \cite{quantized.gauss} with different $q$ values in a $(2q+1)$-ary embedding with steganalysis using maxSRMd2.}
  \label{fig:quantized_gauss_T1}
\end{figure}
\begin{figure}\centering
\includegraphics[width=0.8\linewidth]{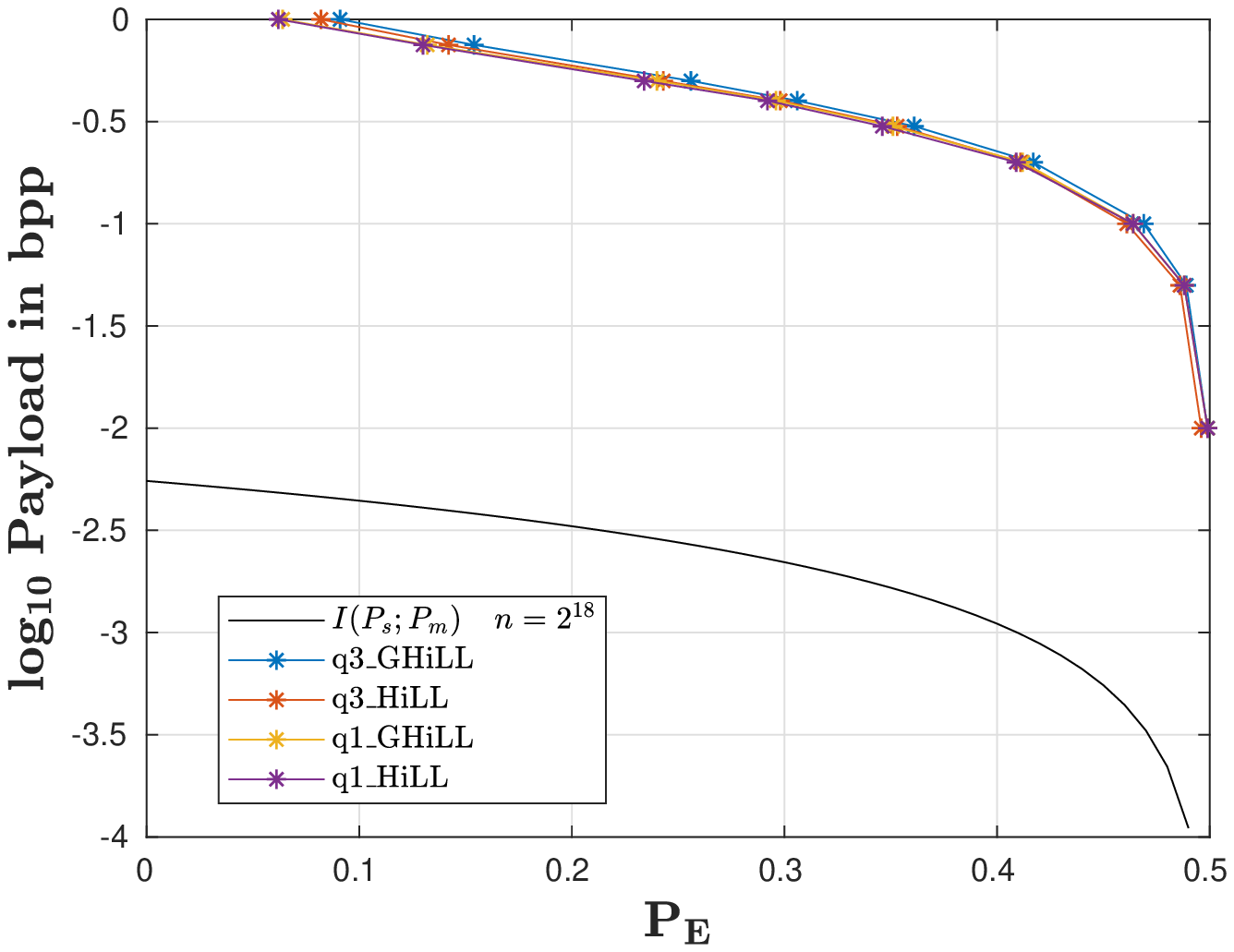}
    \caption{Results from \cite{quantized.gauss} for the HILL steganographic method and the modified Gaussian version for $q=1,3$ with steganalysis using maxSRMd2.}
    \label{fig:quantized_gauss_T2_HILL}
\end{figure}
\begin{figure}\centering
\includegraphics[width=0.8\linewidth]{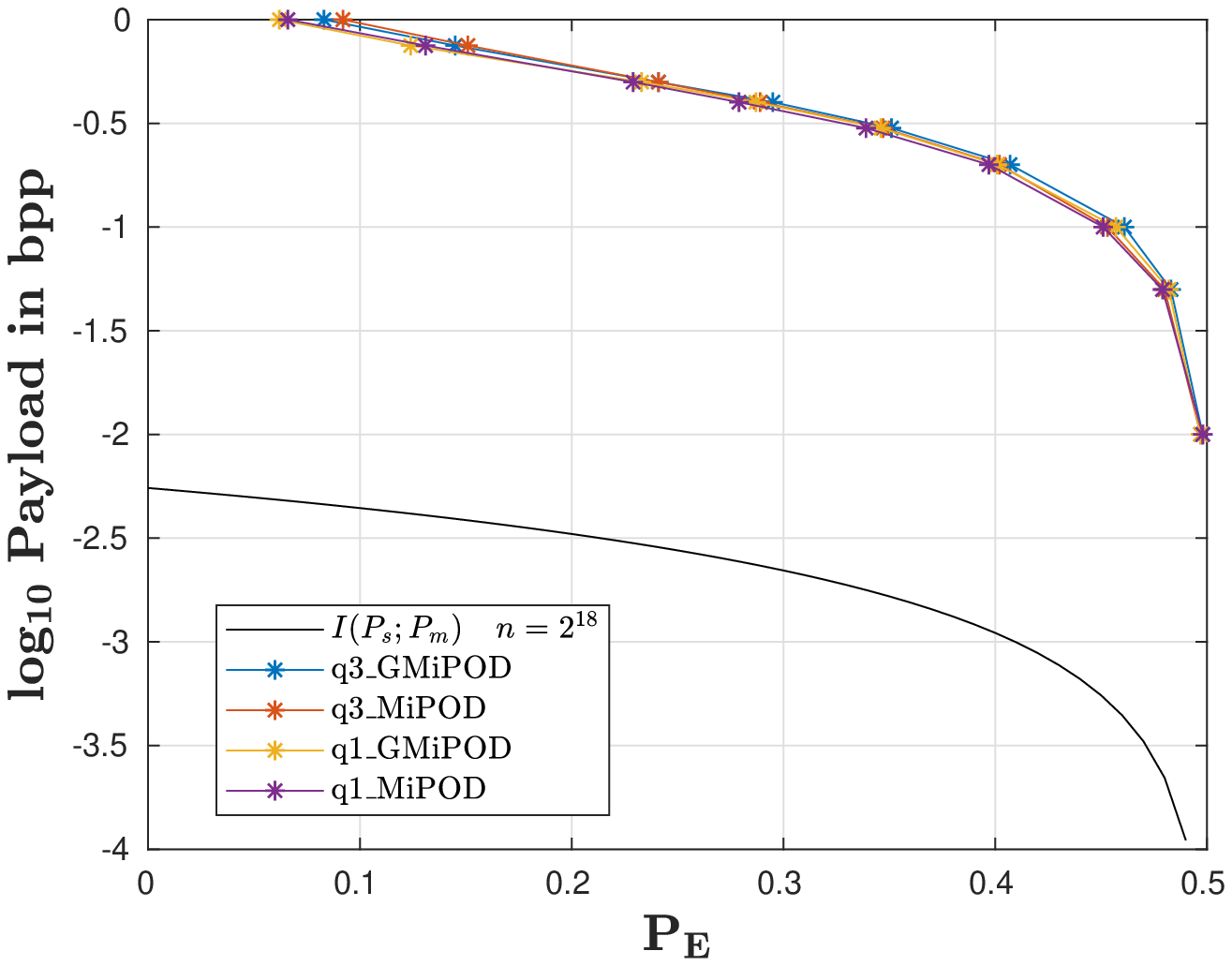}
    \caption{Results from \cite{quantized.gauss} for the MiPOD steganographic method and the modified Gaussian version for $q=1,3$ with steganalysis using maxSRMd2.}
    \label{fig:quantized_gauss_T2_MiPOD}
    \end{figure}
\begin{figure}\centering
\includegraphics[width=0.8\linewidth]{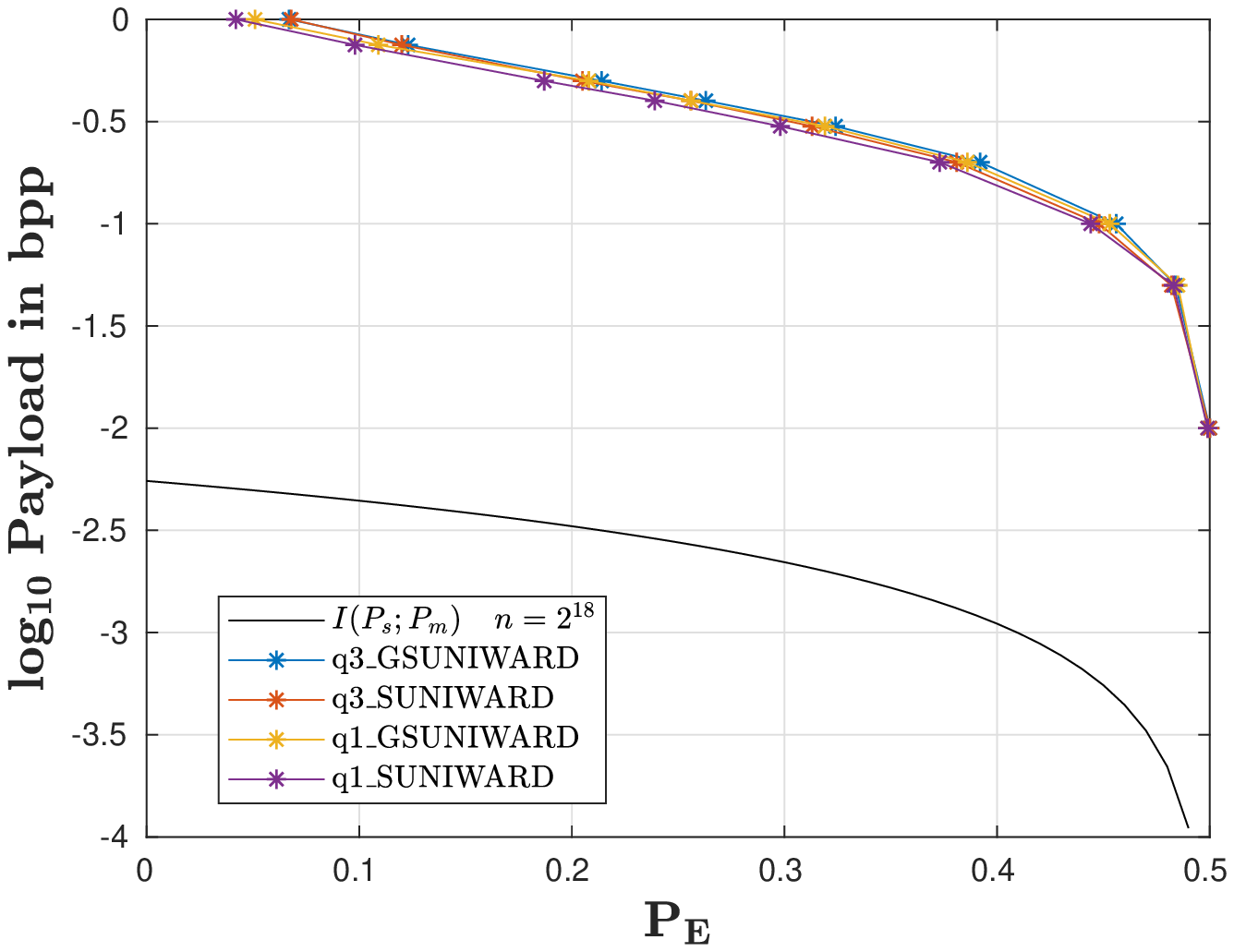}
    \caption{Results from \cite{quantized.gauss} for the SUNIWARD steganographic method and the modified Gaussian version for $q=1,3$ with steganalysis using maxSRMd2.}
    \label{fig:quantized_gauss_T2_SUNIWARD}
\end{figure}
\begin{figure}\centering
\includegraphics[width=0.8\linewidth]{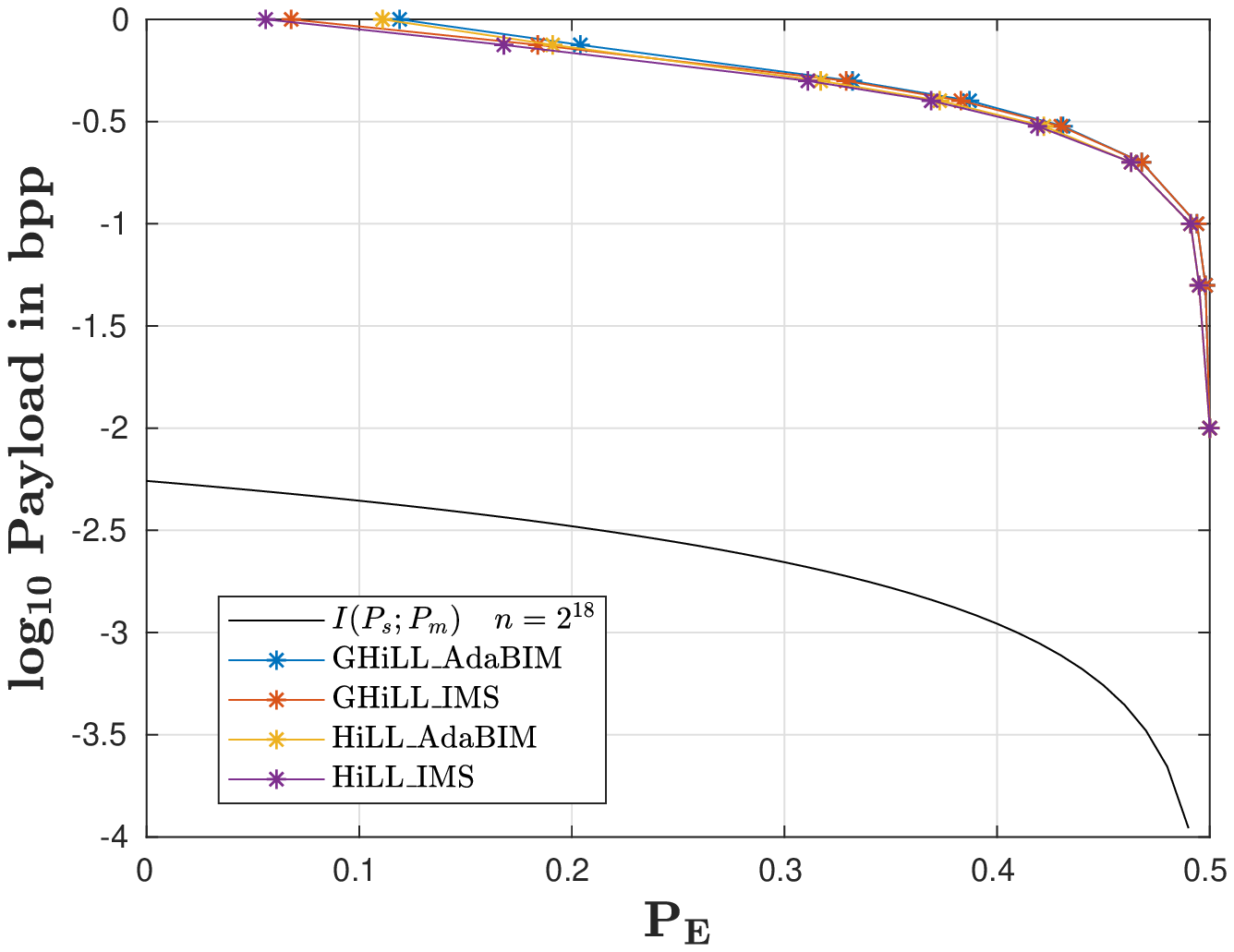}
    \caption{Results from \cite{quantized.gauss} for the HILL steganographic method and the modified Gaussian version with two batching strategies, IMS with batch size 128 and AdaBIM with adaptive batch size, with steganalysis using maxSRMd2.}
    \label{fig:quantized_gauss_T3_HILL}
\end{figure}
\begin{figure}\centering
\includegraphics[width=0.8\linewidth]{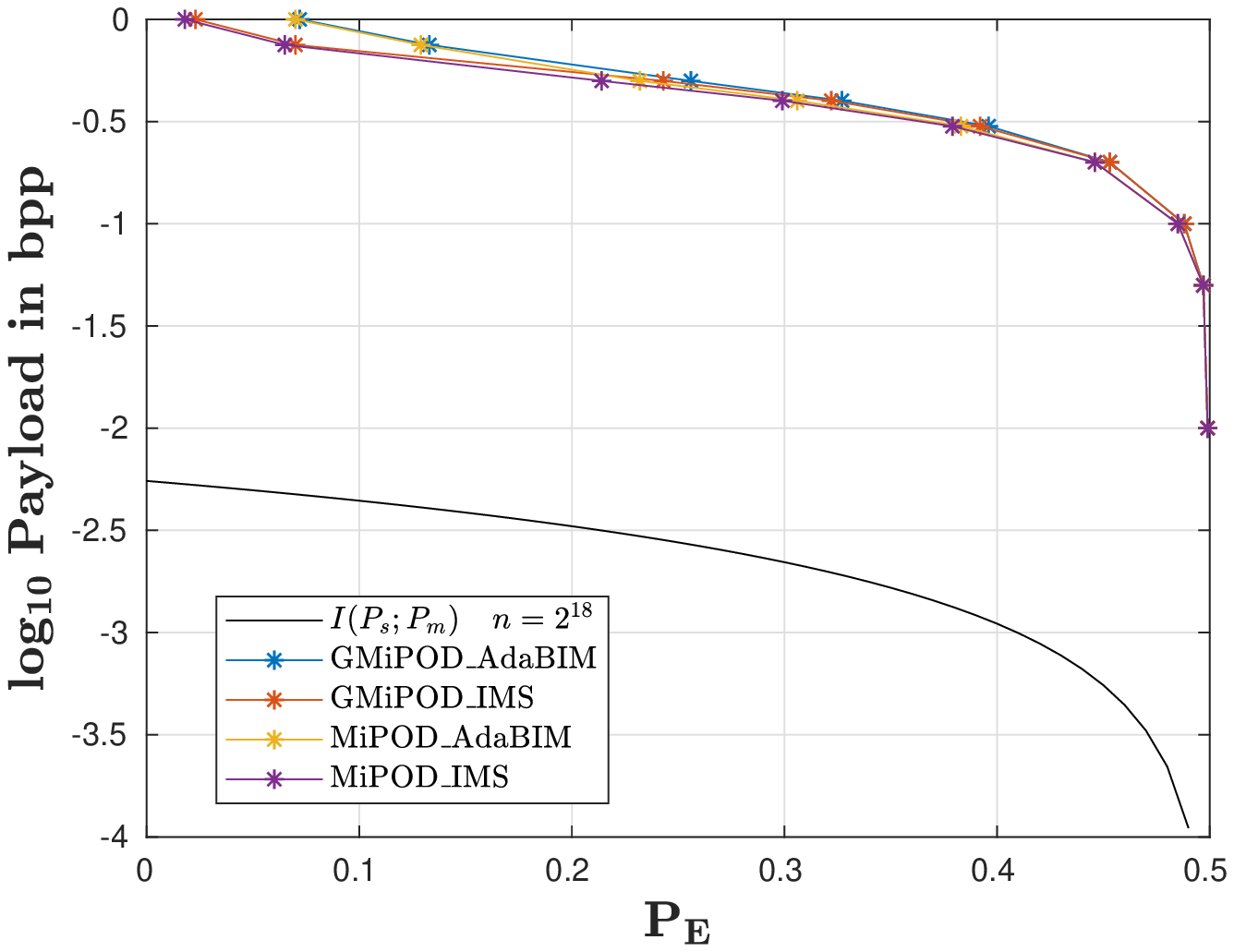}
    \caption{Results from \cite{quantized.gauss} for the MiPOD steganographic method and the modified Gaussian version with two batching strategies, IMS with batch size 128 and AdaBIM with adaptive batch size, with steganalysis using maxSRMd2.}
    \label{fig:quantized_gauss_T3_MiPOD}
    \end{figure}
\begin{figure}\centering
\includegraphics[width=0.8\linewidth]{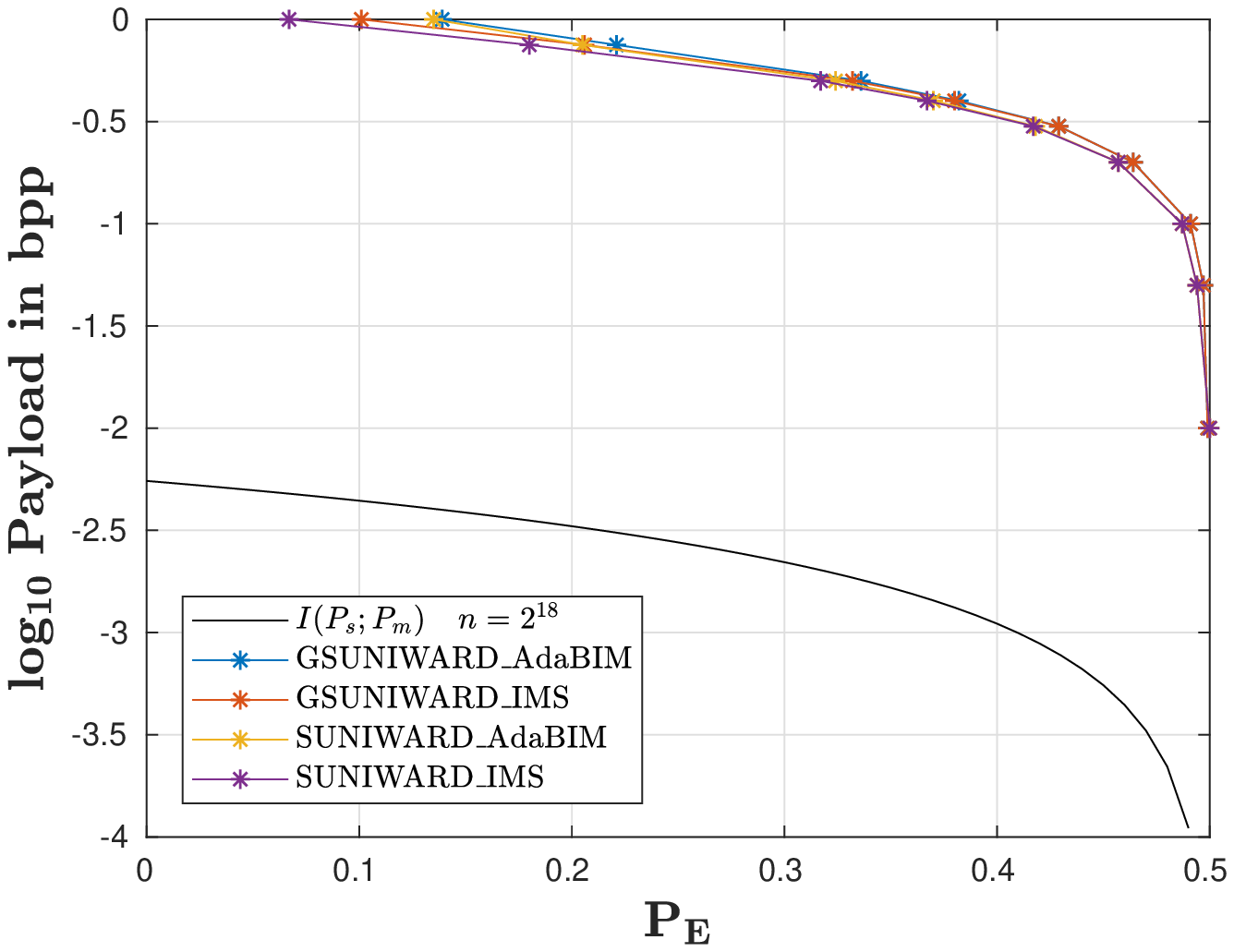}
    \caption{Results from \cite{quantized.gauss} for the SUNIWARD steganographic method and the modified Gaussian version with two batching strategies, IMS with batch size 128 and AdaBIM with adaptive batch size, with steganalysis using maxSRMd2.}
    \label{fig:quantized_gauss_T3_SUNIWARD}
\end{figure}
\end{document}